\newcommand{\setcon}[1]{\lbrace #1 \rbrace}
\newcommand{\Setcon}[2]{\lbrace #1 \mid #2 \rbrace}
\newcommand{\Naturals}{\mathbb{N}}
\newcommand{\abs}[1]{|#1|}
\newcommand{\rel}{\mathit{rel}}
\newcommand{\po}{\mathit{po}}
\newcommand{\poloc}{\mathit{po} \, \text{-}\mathit{loc}}
\newcommand{\rf}{\mathit{rf}}
\newcommand{\tw}{\mathit{tw}}
\newcommand{\ww}{\mathit{ww}}
\newcommand{\cf}{\mathit{cf}}
\newcommand{\fr}{\mathit{fr}}
\newcommand{\dep}{\mathit{dp}}
\newcommand{\Lab}{\!\mathit{Lab}}
\newcommand{\Var}{\!\mathit{Var}}
\newcommand{\Val}{\!\mathit{Val}}
\newcommand{\EV}{E}
\newcommand{\WR}{\!\mathit{WR}}
\newcommand{\RD}{\!\mathit{RD}}
\newcommand{\rd}{\mathit{rd}}
\newcommand{\wri}{\mathit{wr}}
\newcommand{\var}{\mathit{var}}
\newcommand{\history}{\langle O, \po, \rf \rangle}
\newcommand{\graph}[1]{G_{\mathit{#1}}}
\newcommand{\MMcons}{\MM-\emph{Consistency}}
\newcommand{\MM}{\ComplexityFont{MM}}
\newcommand{\pomm}{\mathit{po} \, \text{-} \mathit{mm}}
\newcommand{\rfmm}{\mathit{rf} \text{-} \mathit{mm}}
\newcommand{\posc}{\mathit{po} \, \text{-} \mathit{sc}}
\newcommand{\rfsc}{\mathit{rf} \text{-} \mathit{sc}}
\newcommand{\TSO}{\ComplexityFont{TSO}}
\newcommand{\potso}{\mathit{po} \, \text{-} \mathit{tso}}
\newcommand{\rftso}{\mathit{rf} \text{-} \mathit{tso}}
\newcommand{\PSO}{\ComplexityFont{PSO}}
\newcommand{\popso}{\mathit{po} \, \text{-} \mathit{pso}}
\newcommand{\rfpso}{\mathit{rf} \text{-} \mathit{pso}}
\newcommand{\RMO}{\ComplexityFont{RMO}}
\newcommand{\pormo}{\mathit{po} \, \text{-} \mathit{rmo}}
\newcommand{\rfrmo}{\mathit{rf} \text{-} \mathit{rmo}}
\newcommand{\CC}{\ComplexityFont{CC}}
\newcommand{\PRAM}{\ComplexityFont{PRAM}}
\newcommand{\LOCAL}{\ComplexityFont{LOCAL}}
\newcommand{\POWER}{\ComplexityFont{POWER}}
\newcommand{\torder}{\mathit{t}}
\newcommand{\rorder}{\mathit{r}}
\newcommand{\twWR}{\tw[\WR]}
\newcommand{\Jgraph}[1]{J_{\mathit{#1}}}
\newcommand{\inc}{\mathit{inc}}
\newcommand{\cfinc}{\mathit{cfinc}}
\newcommand{\ext}{\mathit{ext}}
\newcommand{\ordmm}{\mathit{ord} \, \text{-} \mathit{mm}}
\newcommand{\llh}{\mathit{llh}}
\newcommand{\bigO}{\mathcal{O}}
\newcommand{\bigOS}{\bigO^*}
\newcommand{\ETH}{\ComplexityFont{ETH}}
\newcommand{\kSAT}[1]{#1\text{-}\SAT}
\newcommand{\Nxt} {\texttt{Nxt}}
\DeclareMathOperator{\First}{First}
\DeclareMathOperator{\Second}{Sec}
\DeclareMathOperator{\Lin}{Lin}
\newcommand{\problemtitle}[1]{\gdef\@problemtitle{#1}}
\newcommand{\problemshort}[1]{\gdef\@problemshort{#1}}
\newcommand{\probleminput}[1]{\gdef\@probleminput{#1}}
\newcommand{\problemparameter}[1]{\gdef\@problemparameter{#1}}
\newcommand{\problemquestion}[1]{\gdef\@problemquestion{#1}}
	\par\addvspace{.5\baselineskip}
			\normalsize \textbf{Input:} & \normalsize \@probleminput \\
			\normalsize \textbf{Question:} & \normalsize \@problemquestion
	\par\addvspace{.5\baselineskip}
\theoremstyle{definition}
\newtheorem{defi}[theorem]{Definition}
\title{A Framework for Consistency Algorithms}
\author{Peter Chini}{TU Braunschweig}{p.chini@tu-braunschweig.de}{}{}
\author{Prakash Saivasan}{The Institute of Mathematical Sciences}{prakashs@imsc.res.in}{}{}
\authorrunning{P. Chini and P. Saivasan}
\keywords{Consistency, Weak Memory, Fine-Grained Complexity.}
\begin{document}

\maketitle

\begin{abstract}
	We present a framework that provides deterministic consistency algorithms for given memory models.
	Such an algorithm checks whether the executions of a shared-memory concurrent program are consistent under the axioms defined by a model.
	For memory models like $\SC$ and $\TSO$, checking consistency is $\NP$-complete.
	Our framework shows, that despite the hardness, fast deterministic consistency algorithms can be obtained by employing tools from fine-grained complexity.
	
	The framework is based on a universal consistency problem which can be instantiated by different memory models.
	We construct an algorithm for the problem running in time $\bigOS(2^k)$, where $k$ is the number of write accesses in the execution that is checked for consistency.
	Each instance of the framework then admits an $\bigOS(2^k)$-time consistency algorithm.
	By applying the framework, we obtain corresponding consistency algorithms for $\SC$, $\TSO$, $\PSO$, and $\RMO$.
	Moreover, we show that the obtained algorithms for $\SC$, $\TSO$, and $\PSO$ are optimal in the fine-grained sense:
	there is no consistency algorithm for these running in time $2^{o(k)}$ unless the exponential time hypothesis fails.
\end{abstract}

\section{Introduction}
\label{Section:Introduction}

The paper at hand develops a framework for consistency algorithms.
Given an execution of a concurrent program over a shared-memory system, consistency algorithms check whether the execution is consistent under the intended behavior of the memory.
Our framework takes an abstraction of this intended behavior, a \emph{memory model}, and yields a deterministic consistency algorithm for it.
By applying the framework, we obtain provably optimal consistency algorithms for the well-known memory models $\SC$ \cite{Lamport1979}, $\TSO$, and $\PSO$ \cite{Sparc}.

Checking consistency is central in the verification of shared-memory implementations.
Such implementations promise programmers consistency guarantees according to a certain memory model.
However, due to the complex and performance-oriented design, implementing shared memories is sensitive to errors and implementations may not provide the promised guarantees.
Consistency algorithms test this.
They take an execution over a shared-memory implementation, multiple sequences of read and write events, one for each thread. 
Then they check whether the execution is viable under the memory model, namely whether read and write events can be arranged in an interleaving that satisfies the axioms of the model.

In 1997, Gibbons and Korach \cite{Gibbons1997} were the first ones that studied consistency checking as it is considered in this work.
They focused on the basic memory model \emph{Sequential Consistency} ($\SC$) by Lamport \cite{Lamport1979}.
In $\SC$, read and write accesses to the memory are atomic making each write of a thread immediately visible to all other threads.
Gibbons and Korach showed that checking consistency in this setting is, in general, $\NP$-complete.
Moreover, they considered restrictions of the problem showing that even under the assumption that certain parameters like the number of threads are constant, the problem still remains $\NP$-complete.

The SPARC memory models \emph{Total Store Order} ($\TSO$), \emph{Partial Store Order} ($\PSO$), and \emph{Relaxed Memory Order} ($\RMO$) were investigated by Cantin et al. in \cite{Cantin2005}.
The authors showed that, like for $\SC$, checking consistency for these models is $\NP$-hard.
Furbach et al. \cite{Furbach2015} extended the $\NP$-hardness to almost all models appearing in the Steinke-Nutt hierarchy \cite{Steinke2004}, a hierarchy developed for the classification of memory models.
This yields $\NP$-hardness results for memory models like \emph{Causal Consistency} ($\CC$) \cite{Lamport1978}, \emph{Pipelined RAM} ($\PRAM$) \cite{Lipton1988}, \emph{Cache Consistency} \cite{Goodman1991} or variants of \emph{Processor Consistency} \cite{Goodman1991,Ahamad1993}.
Bouajjani et al. \cite{Bouajjani2017} independently found that checking $\CC$, and variants of it, for a given execution is $\NP$-hard.

We approach consistency checking under the assumption of data-independence \cite{Bouajjani2017,Bouajjani2019,Biswas2019}.
In fact, the behavior of a shared-memory implementation or a database does not depend on precise values in practice \cite{Wolper1986,Jepsen,Abdulla2013}.
We can therefore assume that in a given execution, a value is written at most once.
However, the $\NP$-hardness of checking consistency under $\SC$, $\TSO$, and $\PSO$ carries over to the data-independent case \cite{Gibbons1997,Furbach2015}.
Deterministic consistency algorithms for these models will therefore face exponential running times.
By employing a \emph{fine-grained} complexity analysis, we show that one can still obtain consistency algorithms that have only a \emph{mild} exponential dependence on certain parameters and are provably \emph{optimal}.

Fine-grained complexity analyses are a task of \emph{Parameterized Complexity} \cite{Fomin2010,Cygan2015,Downey2013}.
The goal of this new field within complexity theory is to measure the influence of certain parameters on a problem's complexity.
In particular, if a problem is $\NP$-hard, one can determine which parameter $k$ of the problem still offers a fast deterministic algorithm.
Such an algorithm runs in time $f(k) \cdot \mathit{poly}(n)$, where $f$ is a computable function that only depends on the parameter, and $\mathit{poly}(n)$ is a polynomial dependent on the size of the input $n$.
Problems admitting such algorithms lie in the class $\FPT$ of \emph{fixed-parameter tractable} problems.
The time-complexity of a problem in $\FPT$ is denoted by $\bigOS(f(k))$.
A fine-grained complexity analysis determines the precise function $f$ that is needed to solve the problem.
While finding upper bounds amounts to finding algorithms, lower bounds on $f$ can be obtained from the \emph{exponential time hypothesis} ($\ETH$) \cite{Impagliazzo2001}.
It assumes that $n$-variable $\kSAT{3}$ cannot be solved in time $2^{o(n)}$ and is considered standard in parameterized complexity \cite{Cygan2015,Lokshtanov2011,Cygan2016,Chen2005}.
A function $f$ is \emph{optimal} when upper and lower bound match.

Our contribution is a framework which yields consistency algorithms that are optimal in the fine-grained sense.
Obtained algorithms run in time $\bigOS(2^k)$, where $k$ is the number of write events in the given execution.
We demonstrate the applicability by obtaining consistency algorithms for $\SC$, $\TSO$, $\PSO$, and $\RMO$.
Relying on the $\ETH$, we prove that for the former three models, consistency cannot be checked in time $2^{o(k)}$.
This shows that our framework yields optimal algorithms for these models.
Note that considering other parameters like the number of threads, the number of events per thread, or the size of the underlying data domain yields $\W[1]$-hard problems \cite{Pavlogiannis2020,Gibbons1997} that are unlikely to admit $\FPT$-algorithms \cite{Cygan2015,Downey2013}.

The framework is based on a universal consistency problem that can be instantiated by a memory model of choice.
We develop an algorithm for this universal problem running in time $\bigOS(2^k)$.
Then, any instance by a memory model automatically admits an $\bigOS(2^k)$-time consistency algorithm.
For the formulation of the problem, we rely on the formal framework of Alglave \cite{Alglave2012} and Alglave et al. \cite{Alglave2014} for describing memory models in terms of relations.
In fact, checking consistency then amounts to finding a particular \emph{store order} \cite{Bouajjani2019} on the write events that satisfies various acyclicity constraints.

For solving the universal consistency problem, we show that instead of a store order we can also find a total order on the write events satisfying similar acyclicity constraints.
The latter are algorithmically simpler to find.
We develop a notion of \emph{snapshot orders} that mimic total orders on subsets of write events.
This allows for shifting from the relation-based domain of the problem to the subset lattice of writes.
On this lattice, we can perform a dynamic programming which builds up total orders step by step and avoids an explicit iteration over such.
Keeping track of the acyclicity constraints is achieved by so-called \emph{coherence graphs}.
The dynamic programming runs in time $\bigOS(2^k)$ which constitutes the complexity.

To apply the framework, we follow the formal description of $\SC$, $\TSO$, $\PSO$, and $\RMO$, given in \cite{Alglave2012,Alglave2014} and instantiate the universal consistency problem.
Optimality of the algorithms for $\SC$, $\TSO$, and $\PSO$ is obtained from the $\ETH$.
To this end, we construct a reduction from $\kSAT{3}$ to the corresponding consistency problem that generates only linearly many write events.
The reduction transports the assumed lower bound on $\kSAT{3}$ to consistency checking.

\subparagraph*{Related Work.}
\label{Section:RelatedWork}

In its general form, consistency checking is $\NP$-hard for most memory models.
Furbach et al. \cite{Furbach2015} show that $\LOCAL$ \cite{Heddaya1992} is an exception.
Checking consistency under $\LOCAL$ takes polynomial time.
This also holds for \emph{Cache Consistency} and $\PRAM$ if certain parameters of the consistency problem are assumed to be constant.
In the case of data-independence, Bouajjani et al. \cite{Bouajjani2017} show that checking consistency under $\CC$ and variants of $\CC$ also takes polynomial time.
Wei et al. \cite{Wei2013} present a similar result for $\PRAM$.
In \cite{Bouajjani2019}, Bouajjani et al. present practically efficient algorithms for the consistency problems of $\SC$ and $\TSO$ under data-independence.
They rely on the polynomial-time algorithm for $\CC$ \cite{Bouajjani2017} and obtain a partial store order, which is completed by an enumeration.
In theory, the enumeration has a worst-case time complexity of $\bigOS(k^k)$.
We avoid such an enumeration by a dynamic programming running in time $\bigOS(2^k)$.
Consistency checking for weaker and stronger notions of consistency, like \emph{linearizability} \cite{Herlihy1990}, is considered in \cite{Enea2018,Emmi2015,Emmi2018}.

Instead of checking consistency for a single execution of a shared-memory implementation, there were efforts in verifying that all executions are consistent under a certain memory model.
Alur et al. show in \cite{Alur2000} that for $\SC$, the problem is undecidable.
This also holds for $\CC$ \cite{Bouajjani2017}.
Under data-independence, the problem becomes decidable for $\CC$ \cite{Bouajjani2017}.
Verifying \emph{Eventual Consistency} \cite{Terry1995} was shown to be decidable by Bouajjani et al. in \cite{Bouajjani2014}.
There has also been work on other verification problems like reachability and robustness.
Atig et al. show in \cite{Atig2010} that, under $\TSO$ and $\PSO$, reachability is decidable.
In \cite{Atig2012} the authors extend their results and present a relaxation of $\TSO$ with decidable reachability problem.
Robustness against $\TSO$ was considered in \cite{Bouajjani2011} and shown to be $\PSPACE$-complete.
This also holds for $\POWER$ \cite{Mador-Haim2012,Sarkar2011}, as shown in \cite{Meyer2014}, and for partitioned global address spaces~\cite{Calin2013}.

Parameterized complexity has been applied to other verification problems as well.
Biswas and Enea \cite{Biswas2019} study the complexity of transactional consistency and obtain an $\FPT$-algorithm in the size and the width of a history.
This also yields an algorithm for the serializability problem, proven to be $\NP$-hard by Papadimitriou \cite{Papadimitriou1979} in 1979.
A fine-grained algorithm for serializability under $\TSO$ was given in \cite{Farzan2016}.
The authors of \cite{Farzan2009} present an $\FPT$-algorithm for predicting atomicity violations as well as an intractability result.
The parameterized complexity of data race prediction was considered in \cite{Pavlogiannis2020}.
Fine-grained complexity analyses were conducted for reachability under \emph{bounded context switching} on finite-state systems \cite{Chini2017}, and for reachability and liveness on parameterized systems \cite{Chini2018,Chini2019}.

\section{Preliminaries}
\label{Section:Preliminaries}

To state our framework, we introduce some basic notions around memory models and the consistency problem.
We mainly follow \cite{Alglave2014,Alglave2012,Bouajjani2019,Bouajjani2017}.
Further, we give a short introduction into fine-grained complexity.
For standard textbooks in this field, we refer to  \cite{Fomin2010,Downey2013,Cygan2015}.

\subparagraph{Relations, Histories, and Memory Models.}
We consider the consistency problem: given an execution of a concurrent program and a model of the shared memory, decide whether the execution adheres to the model.
Formally, executions consist of \emph{events} modeling write and read accesses to the shared memory.
To define these, let $\Var$ be the finite set of variables of the program.
Moreover, let $\Val$ be its finite data domain and $\Lab$ a finite set of labels.
A \emph{write event} is defined by $w\!:\! \wri(x,v)$, where $w \in \Lab$ is a label, $x \in \Var$ is a variable, and $v \in \Val$ is a value.
The set of write events is defined by $\WR = \Setcon{w \!:\! \wri(x,v)}{w \in \Lab, x \in \Var, v \in \Val}$.
A \emph{read event} is given by $r \!:\! \rd(x,v)$.
The set of read events is denoted by $\RD$.
We define the set of all \emph{events} by $\EV = \WR \cup \RD$.
If it is clear from the context, we omit the label of an event.
Given an event $o \in \EV$, we access the variable of $o$ by $\var(o) \in \Var$.
For a subset $O \subseteq \EV$, we denote by $\WR(O)$ and $\RD(O)$ the set of write and read events in $O$.

For modeling dependencies between events we use strict orders.
Let $O \subseteq \EV$ be a set of events.
A \emph{strict partial order} on $O$ is an irreflexive, transitive relation over $O$.
A \emph{strict total order} is a strict partial order that is total.
We often refer to the notions without mentioning that they are strict.
Given two relations $\rel, \rel' \subseteq O \times O$, we denote by $\rel \circ \rel'$ their composition, by $\rel^+$ the transitive~closure, and by $\rel^{-1}$ the inverse.
For variable $x$, we denote by $\rel_x$ the restriction of $\rel$ to events on $x$: $\rel_x = \Setcon{(o,o') \in \rel}{\var(o) = \var(o') = x}$.

Executions are modeled by \emph{histories}.
A \emph{history} is a tuple $h = \history$, where $O \subseteq \EV$ is a set of events executed by the threads of the program.
The \emph{program order} $\po$ is a partial order on $O$ which orders the events of a thread according to the execution.
Typically, it is a union of total orders, one for each thread.
The relation $\rf \subseteq \WR(O) \! \times \RD(O)$ is called \emph{reads-from} relation.
It specifies the write event providing the value for a read event in the history.
Moreover, for each read event $r \in \RD(O)$ we have a write event $w \in \WR(O)$ such that $(w,r) \in \rf$ and if $(w,r) \in \rf$, both events access the same variable.

Note that we assume the reads-from relation to be given as a part of the history.
This is due to the data-independence of shared-memory and database implementations in practice \cite{Wolper1986,Jepsen,Biswas2019,Abdulla2013,Bouajjani2017,Bouajjani2019}.
This means that the behavior of the implementation does not depend on actual values and in an execution, we may assume each value to be written at most once.
From such an execution, we can simply read off the relation $\rf$.

Our framework is compatible with histories that feature \emph{initial writes}.
These histories have a write event for each variable writing the initial value of that variable.
Formally, these write events are smaller than all other events under program order.
If a history $h = \history$ is fixed, we abuse notation and also use $\WR$ and $\RD$ to denote $\WR(O)$ and $\RD(O)$.
For a variable $x$, we write $\WR(x) = \Setcon{w \in \WR}{\var(w) = x}$ for the set of write events on $x$ in $h$.
Furthermore, we will later make use of the relation $\poloc$, defined by restricting $\po$ to events on the same variable: $\poloc = \Setcon{(o,o') \in \po}{\var(o) = \var(o')}$.

A \emph{memory model} is an abstraction of the memory behavior defining axioms that the relations in a history must adhere to.
Formally, a \emph{memory model} $\MM$ is a tuple \mbox{$\MM = (\pomm,\rfmm)$.}
The relation $\pomm$, also called \emph{preserved program order}, is a subrelation of $\po$ describing the structure maintained by the memory model.
The latter relation $\rfmm$ is a subrelation of $\rf$.
It shows which write events are visible globally under $\MM$.

\subparagraph*{Fine-Grained Complexity.}
For many memory models, the consistency problem is $\NP$-hard \cite{Furbach2015,Gibbons1997,Cantin2005,Bouajjani2017}.
Hence, deterministic consistency algorithms usually face exponential running times.
But exponents might only depend on certain parameters of the problem which still allow the algorithm for being fast.
Finding such parameters is a task of \emph{parameterized~complexity}.

The basis of parameterized complexity are \emph{parameterized problems}.
That is, subsets $P$ of $\Sigma^* \times \Naturals$, where $\Sigma$ is a finite alphabet.
An input to $P$ is of the form $(x,k)$, with $k$ being called the \emph{parameter}.
A particularly interesting class of parameterized problems are the \emph{fixed-parameter tractable} $(\FPT)$ problems.
A problem $P$ is $\FPT$ if it can be solved by a deterministic algorithm running in time $f(k) \cdot \abs{x}^{\bigO(1)}$, where $f$ is a computable function only dependent on $k$.
The running time of such an algorithm is usually denoted by $\bigOS(f(k))$ to suppress the polynomial part.
The class $\FPT$ is contained in the class $\W[1]$.
Problems that are $\W[1]$-hard are considered intractable since they are unlikely to be $\FPT$.

Given a fixed-parameter tractable problem $P$, finding an upper bound for $f$ is achieved by constructing an algorithm for $P$.
Lower bounds on $f$ are usually obtained from the \emph{exponential time hypothesis} ($\ETH$) \cite{Impagliazzo2001}.
This standard hardness assumptions asserts that $\kSAT{3}$ cannot be solved by an algorithm running in time $2^{o(n)}$, where $n$ is the number of variables.
A lower bound on $f$ is then obtained by a suitable reduction from $\kSAT{3}$ to $P$.
We are interested in finding the \emph{optimal} $f$ for the consistency problem where upper and lower bound match.
The search for such an $f$ is referred to as \emph{fine-grained complexity}.

\section{Framework}
\label{Section:Framework}

We present our framework.
Given a model describing the memory, the framework provides an (optimal) deterministic algorithm for the corresponding consistency problem.
That is, whether a given history can be scheduled under the axioms imposed by the model.
The obtained algorithm can then be used within a testing routine for concurrent programs.

At the heart of the framework is a consistency problem that can be instantiated with different memory models.
We solve this universal problem by switching from a relation-based domain, where the problem is defined, to a subset-based domain.
On the latter, we can then apply a dynamic programming which constitutes the desired deterministic algorithm.

\subsection{Universal Consistency}
\label{Section:Universal}

The basis of our framework is a universal consistency problem which can be instantiated to simulate a particular memory model.
For its formulation, we make use of a consistency notion that allows for the construction of a fast algorithm but deviates from the literature~\cite{Alglave2012,Alglave2014,Bouajjani2019} at first sight.
Therefore, it is proven in Section \ref{Section:Instantiations} that instantiating the problem with a particular memory model yields the correct notion of consistency.

We clarify our notion of consistency.
Intuitively, a history is consistent under a memory model if it can be scheduled such that certain axioms defined by the model are satisfied.
Following the formal framework of \cite{Alglave2012,Alglave2014}, finding such a schedule amounts to finding a particular order of the write events that satisfies acyclicity requirements imposed by the axioms.
Formally, let $h = \history$ be a history and let $\MM$ be a memory model described by the tuple $(\pomm,\rfmm)$.
Then $h$ is called $\MM$\emph{-consistent} if there exists a strict total order $\tw$ on the write events $\WR$ of $h$ such that the graphs
\begin{align*}
	\graph{loc} = (O, \poloc \cup \rf \cup \tw \cup \cf)
	\ \ \text{and} \ \ 
	\graph{mm} = (O, \pomm \cup \rfmm \cup \tw \cup \cf)
\end{align*}
are both acyclic.
Here, the \emph{conflict relation} $\cf$ is defined by $\cf = \rf^{-1} \circ \bigcup_{x \in \Var} \tw_x$.
Phrased differently, $(r,w) \in \cf$ if $r$ is a read event on a variable $x$, $w$ is a write event on $x$, and there is a write event $w'$ on $x$ such that $(w',r) \in \rf$ and $(w',w) \in \tw$.

The acyclicity of $\graph{loc}$ is called \emph{uniprocessor} requirement \cite{Alglave2012} or \emph{memory coherence} for each location \cite{Cantin2005}.
Roughly, it demands that an order among writes to the same location that can be extracted from the history, is kept in $\tw$.
The second acyclicity requirement in the definition resembles the underlying memory model $\MM$.
If $\graph{mm}$ is acyclic, the history can be scheduled adhering to the axioms defined by $\MM$.

Our definition of consistency deviates from the literature in two aspects.
First, we demand a total order $\tw$ instead of a \emph{store order}, a partial order that is total on writes to the same location \cite{Alglave2012,Alglave2014,Bouajjani2019}.
In Section \ref{Section:Instantiations} we will show that the resulting notions of consistency are equivalent.
A further difference is that we do not explicitly test for \emph{out of thin air values}~\cite{Manson2005}.
For the majority of memory models considered in this work, the test is not necessary as it is implied by the acyclicity of $\graph{loc}$ and $\graph{mm}$.
But it can easily be added when needed.

We are ready to state the universal consistency problem.
To this end, let $\MM$ be a fixed memory model.
Given a history $h$, the problem asks whether $h$ is $\MM$-consistent.
\begin{myproblem}
	\problemtitle{$\MM$-Consistency}
	\probleminput{A history $h = \langle O, \po, \rf \rangle$.}
	\problemquestion{Is $h$ $\MM$-consistent?}
\end{myproblem}
Instantiations of the problem by well-known memory models like $\SC$ or $\TSO$ are typically $\NP$-hard~\cite{Gibbons1997,Furbach2015}.
However, we are interested in a deterministic algorithm for \MMcons.
While we cannot avoid an exponential running time for such an algorithm, a fine-grained complexity analysis can determine the \emph{optimal} exponential dependence.
Many parameters of \MMcons~like the number of threads, the maximum size per thread, or the size of the data domain yield parameterizations that are $\W[1]$-hard \cite{Pavlogiannis2020,Gibbons1997}.
Therefore, we conduct a fine-grained analysis for the parameter $k = \abs{\WR}$, the number of writes in $h$.
The main finding is an algorithm for \MMcons~running in time $\bigOS(2^k)$.
The optimality of this approach is shown in Section \ref{Section:LowerBounds} by a complementing lower bound.
We formally state the upper bound in the following theorem.
There, $n = \abs{O}$ denotes the number of events in $h$.
\begin{theorem}
	\label{Theorem:MMConsistency}
	The problem $\MM$-Consistency can be solved in time $\bigO(2^k \cdot k^2 \cdot n^2)$.
\end{theorem}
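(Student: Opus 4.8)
The plan is to decide the existence of the total order $\tw$ by a dynamic programming over the subset lattice $2^{\WR}$, thereby replacing an explicit search through all $k!$ candidate orders. The starting point is the observation that any total order on $\WR$ can be constructed by repeatedly appending a new maximal element: it corresponds to a chain $\emptyset = S_0 \subset S_1 \subset \dots \subset S_k = \WR$ with $\abs{S_i} = i$, where the unique write in $S_i \setminus S_{i-1}$ is the $i$-th element of $\tw$. I would call such a set $S_i$ a \emph{snapshot}; it records which writes form a prefix of $\tw$ while forgetting their internal arrangement.

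The crucial point is that forgetting the internal order is sound. When a write $w$ is appended as the new $\tw$-maximum of a snapshot $S$, the edges this introduces are exactly the $\tw$-edges $(w', w)$ for every $w' \in S$ together with the conflict edges $(r, w) \in \cf$ for every read $r$ whose $\rf$-source lies in $\WR(\var(w)) \cap S$; both families are determined by the set $S$ alone and not by how $S$ is ordered. Hence whether appending $w$ preserves acyclicity of $\graph{loc}$ and $\graph{mm}$ is a property of the pair $(S, w)$ only. I would make this precise through a \emph{coherence graph} attached to each snapshot, carrying the fixed edges $\poloc / \pomm$ and $\rf / \rfmm$ together with the $\tw$- and $\cf$-edges forced so far, and show that appending $w$ amounts to inserting the new edges above and testing that no cycle is closed.

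With this in hand the algorithm is a bottom-up pass over $2^{\WR}$. I would mark a snapshot $S$ as \emph{realizable} if some total order of $S$ extends to a $\tw$-prefix keeping both coherence graphs acyclic, and compute this flag by the recurrence: $S$ is realizable iff there is $w \in S$ such that $S \setminus \{w\}$ is realizable and appending $w$ to $S \setminus \{w\}$ closes no cycle in either graph. The empty set is realizable, and $h$ is $\MM$-consistent iff the full set $\WR$ is realizable. Correctness reduces to two claims: that the incremental acyclicity tests, taken along a chain, are equivalent to acyclicity of the final graphs with the completed $\tw$ and $\cf$; and that realizability depends only on the snapshot, which is the soundness of forgetting the order argued above. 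For the running time there are $2^k$ snapshots, each with at most $k$ candidate maximal writes, and processing one candidate—assembling the $\bigO(n^2)$ forced $\tw$- and $\cf$-edges and running a single acyclicity test—I would account for at $\bigO(k \cdot n^2)$, yielding the claimed $\bigO(2^k \cdot k^2 \cdot n^2)$ bound.

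I expect the main obstacle to be the correctness of the reduction from global acyclicity to the local, per-step acyclicity checks. In particular, the hard part will be verifying that the conflict edges which remain \emph{undetermined} while $w$ still sits strictly inside a later prefix are precisely those introduced at the step where the relevant same-variable write is appended, so that no constraint is dropped and none is imposed prematurely. Establishing this—equivalently, that every cycle in the completed $\graph{loc}$ or $\graph{mm}$ is witnessed at exactly one append step along the chain—is what justifies both the soundness of snapshots and the equivalence of the incremental test with the original definition of $\MM$-consistency.
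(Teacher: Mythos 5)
Your proposal follows the paper's architecture quite closely: a Boolean table over the subset lattice $2^{\WR}$, snapshots that deliberately forget the internal arrangement of the placed writes, a recurrence that branches over one candidate write per step and tests an order-oblivious graph, and the identical complexity bookkeeping ($2^k$ sets, $k$ candidates, $\bigO(k\cdot n^2)$ per test). The only structural deviation is the direction: you grow a \emph{prefix} and append new $\tw$-\emph{maximal} writes, whereas the paper's snapshot $V$ is the set of $\tw$-largest writes with the unordered complement below it, and its recursion inserts a new \emph{minimal} element of $V$. That mirroring by itself is workable, but it is not a literal reflection, because the conflict relation $\cf = \rf^{-1}\circ\bigcup_{x\in\Var}\tw_x$ is asymmetric: conflict edges always point from reads to $\tw$-later writes, which is why the paper's cycle surgery can re-anchor every cycle at the new minimal element $v$ via its out-edges, and why your direction needs extra care exactly where you placed it.

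The genuine gap is your edge accounting at an append step, on which your central soundness claim rests. You assert that appending $w$ to $S$ introduces \emph{exactly} the order edges $(w',w)$ for $w'\in S$ and the conflict edges $(r,w)$ for reads of same-variable writes in $S$. This is false: the step also newly determines that $w$ precedes every write outside $S\cup\setcon{w}$ and, via $\cf$, that every reader of $w$ precedes all $\tw$-later same-variable writes. The paper's coherence graphs contain precisely this extra information --- $\rorder[V,v]$ orders the whole unplaced complement against all placed writes, not merely the new element against its neighbors, and $\cf[V,v]$ is derived from all of it. With your narrower step graph the claimed equivalence (per-step acyclicity along a chain iff acyclicity of the completed graphs) fails. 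Concretely, under $\SC$: take writes $w_1\!:\!\wri(x,1)$, $b\!:\!\wri(x,2)$, $w\!:\!\wri(y,1)$, $\bar u\!:\!\wri(y,2)$, reads $r\!:\!\rd(x,1)$ (from $w_1$) and $\rho\!:\!\rd(y,2)$ (from $\bar u$) with $\rho\xrightarrow{\po}r$. Along the chain $w_1,b,w,\bar u$ every one of your step tests passes, yet the completed graph contains the cycle $r\xrightarrow{\cf}b\xrightarrow{\tw}\bar u\xrightarrow{\rf}\rho\xrightarrow{\po}r$: the edge $(b,\bar u)$ is a new-maximum-to-future edge that your graph at step $(\setcon{w_1},b)$ omits, and by the time $\bar u$ is appended the conflict $(r,b)$ is internal to the prefix and equally invisible; no single step witnesses the cycle. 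With the paper-style step graph, the $b$-step already fails via $b\to\bar u\to\rho\to r\to b$. Since your tests only under-approximate the required constraints, the recursion is one-sidedly unsound and may accept inconsistent histories; repairing it means enlarging the step graph to the full relation forced by the pair $(S,w)$ and then redoing the paper's cycle-rewriting argument in the mirrored setting, where the forced prefix-to-future edges are exactly what lets conflict edges be shortcut.
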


Note that an algorithm for \MMcons~running in time $\bigOS(k^k)$ is immediate.
One can iterate over all total orders of $\WR$ and check the acyclicity of $\graph{loc}$ and $\graph{mm}$ in polynomial time.
Since we cannot afford this iteration in $\bigOS(2^k)$, improving the running time needs an alternative approach and further technical development that we summarize in Section \ref{Section:Algorithm}.

\subsection{Algorithm}
\label{Section:Algorithm}

We present the upper bound for \MMcons~as stated in Theorem \ref{Theorem:MMConsistency}.
Our algorithm is a dynamic programming. 
It switches from the domain of total orders to subsets of write events and iterates over the latter.
The crux is that for a particular subset we do not need to remember a precise order.
In fact, we only need to store that it can be ordered by a so-called \emph{snapshot order} that mimics total orders on subsets.
Not having a precise order at hand yields a disadvantage: we cannot just test both acyclicity requirements in the end.
Instead, we perform an acyclicity test on a \emph{coherence graph} in each step of the iteration.
These graphs carry enough information to ensure acyclicity as it is required by \MMcons.

We begin our technical development by introducing \emph{snapshot orders}.
Intuitively, these simulate total orders of the write events on subsets of writes.
Given a subset, a \emph{snapshot order} consists of two parts: a total order on the subset and a partial order.
The latter expresses that the complement of the given set precedes the subset but is yet unordered.
\begin{defi}
	Let $V \subseteq \WR$.
	A \emph{snapshot order} on $V$ is a union $\tw[V] = \torder[V] \cup \rorder[V]$.
\end{defi}
The relation $\torder[V]$ is a strict total order on $V$ and $\rorder[V] = \Setcon{(\overline{v},v)}{\overline{v} \in \overline{V}, v \in V}$ arranges that the elements of $\overline{V}$ are smaller than the elements of $V$.
By $\overline{V}$, we denote the complement of $V$ in the write events, $\overline{V} = \WR \setminus V$.
Note that $\rorder[V]$ does not impose an order among $\overline{V}$.

A snapshot order is indeed a strict partial order.
Even more, when the considered set is the whole write events $\WR$, a snapshot order $\twWR$ is a total order on $\WR$.
Therefore, $\MM$-consistency can be checked by finding a snapshot order on $\WR$ satisfying both acyclicity requirements.
The advantage of this formulation is that we can construct such an order from snapshot orders on subsets.
Technically, we parameterize\footnote{The parameterization here does not refer to parameterized complexity.} the problem along all $V \subseteq \WR$.

For the acyclicity requirements, we need a similar parameterization.
To this end, let $V \subseteq \WR$ be a subset and $\tw[V]$ a snapshot order on $V$.
We parameterize the above graphs $\graph{loc}$ and $\graph{mm}$ via exchanging the total order by the snapshot order:
\begin{align*}
	\graph{loc}(\tw[V]) &= (O, \poloc \cup \rf \cup \tw[V] \cup \cf[V]), \\ \graph{mm}(\tw[V]) &= (O, \pomm \cup \rfmm \cup \tw[V] \cup \cf[V]).
\end{align*}
As above, the conflict relation is defined by $\cf[V] = \rf^{-1} \circ \bigcup_{x \in \Var} \tw[V]_x$.
Note that for a snapshot order $\twWR$ on the whole set of write events, the resulting graphs $\graph{loc}(\twWR)$ and $\graph{mm}(\twWR)$ are exactly those appearing in the acyclicity requirement.

Now we have the tools to state the parameterization of \MMcons~along subsets of write events.
This allows for leaving the domain of total orders and switch to subsets instead.
To this end, we define a table $T$ with a Boolean entry $T[V]$ for each $V \subseteq \WR$.
Entry $T[V]$ will be $1$, if there is a snapshot order on $V$ satisfying the acyclicity requirement on both parameterized graphs.
Otherwise, $T[V]$ will evaluate to $0$.
Formally, $T[V]$ is defined by
\begin{align*}
	T[V] = \left\lbrace
	\begin{aligned}
		1,& ~\text{if}~ \exists ~\text{snapshot ord.}~ \tw[V] : \graph{loc}(\tw[V]) ~\text{and}~ \graph{mm}(\tw[V]) ~\text{are acyclic},\\
		0,& ~\text{otherwise}.
	\end{aligned}
	\right.
\end{align*}

The following lemma relates \MMcons~to the table $T$.
It is crucial in our development as it states the correctness of the constructed parameterization.
The proof follows from the beforehand definitions and the fact that a snapshot order on $\WR$ is already total.
\begin{lemma}
	\label{Lemma:MMParameterization}
	History $h$ is $\MM$-consistent if and only if $T[\WR] = 1$.
\end{lemma}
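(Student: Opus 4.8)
The plan is to prove the biconditional by unfolding the definitions on both sides and exploiting the observation, stated in the text, that a snapshot order on the full set $\WR$ is already a strict total order. The proof is essentially a definitional chase, so I would structure it as two implications, each obtained by producing the right object from the other.

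For the forward direction, I would assume $h$ is $\MM$-consistent. By definition this gives a strict total order $\tw$ on $\WR$ such that $\graph{loc}$ and $\graph{mm}$ are acyclic. The key step is to exhibit a snapshot order on $\WR$ that reproduces exactly these graphs. Taking $V = \WR$, the complement is $\overline{V} = \emptyset$, so $\rorder[\WR] = \emptyset$ and hence any snapshot order $\twWR = \torder[\WR] \cup \rorder[\WR]$ collapses to just its total-order part $\torder[\WR]$ on $\WR$. I would therefore set $\torder[\WR] = \tw$, yielding $\twWR = \tw$. It then follows immediately that $\cf[\WR] = \cf$, that $\graph{loc}(\twWR) = \graph{loc}$, and that $\graph{mm}(\twWR) = \graph{mm}$, because each graph is defined by the same union of relations with $\tw$ replaced by $\twWR$. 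Since the latter two graphs are acyclic by assumption, the snapshot order $\twWR$ witnesses the condition in the definition of $T[\WR]$, so $T[\WR] = 1$.

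For the reverse direction, I would assume $T[\WR] = 1$. By the definition of the table, there is a snapshot order $\twWR$ on $\WR$ with $\graph{loc}(\twWR)$ and $\graph{mm}(\twWR)$ acyclic. The crucial point, which I would state explicitly (as flagged in the remark before the Definition), is that a snapshot order on $V = \WR$ is a strict total order: the residual relation $\rorder[\WR]$ is empty since $\overline{\WR} = \emptyset$, so $\twWR = \torder[\WR]$ is exactly the total order $\torder[\WR]$ on $\WR$. Setting $\tw = \twWR$, the equalities $\cf[\WR] = \cf$, $\graph{loc}(\twWR) = \graph{loc}$, and $\graph{mm}(\twWR) = \graph{mm}$ again hold, so the acyclicity of the parameterized graphs transfers directly to $\graph{loc}$ and $\graph{mm}$. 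This produces a total order on $\WR$ satisfying both acyclicity requirements, which is precisely the definition of $\MM$-consistency.

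The only genuine content — and thus the step I would be most careful about — is verifying that the parameterized definitions specialize correctly at $V = \WR$, namely that $\rorder[\WR] = \emptyset$ forces $\twWR$ to be a total order and that the conflict relation $\cf[\WR]$ coincides with $\cf$. Everything else is a transparent matching of the two graph definitions. There is no real obstacle here; the lemma is a bookkeeping bridge that legitimizes the switch from the relation-based formulation to the table-based one that the dynamic programming will populate, and the proof should occupy only a few lines once the $V = \WR$ specialization is spelled out.
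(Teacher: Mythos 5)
Your proof is correct and is exactly the argument the paper intends: the paper dispatches this lemma with the one-line remark that it ``follows from the beforehand definitions and the fact that a snapshot order on $\WR$ is already total,'' which is precisely your central observation that $\overline{\WR} = \emptyset$ forces $\rorder[\WR] = \emptyset$, so $\twWR = \torder[\WR]$ and the parameterized graphs $\graph{loc}(\twWR)$, $\graph{mm}(\twWR)$ coincide with $\graph{loc}$, $\graph{mm}$. You have merely spelled out in both directions what the paper leaves implicit, so there is nothing to correct.
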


We are now left with the problem of evaluating the entry $T[\WR]$.
Our approach is to set up a recursion among the entries of $T$ and evaluate it via a bottom-up dynamic programming.
The recursion will explain how entries of subsets are aggregated to compute entries of larger sets.
In fact, write events are added element by element:
the recursion shows how an entry $T[V]$ can be utilized to compute the entry of an enlarged set $V \cup \setcon{v}$, where $v \in \overline{V}$.

When passing from $T[V]$ to $T[V \cup \setcon{v}]$, we need to provide a snapshot order on $V \cup \setcon{v}$ that satisfies the acyclicity requirements.
A snapshot order on $V$ can always be extended to a snapshot order on $V \cup \setcon{v}$:
we insert $v$ as new minimal element in the contained total order.
But we need to keep track of whether the acyclicity is compatible with the new minimal element $v$.
To this end, we perform acyclicity tests on \emph{coherence graphs}.
These do not depend on a snapshot order and solely rely on the fact that $v$ is the new minimal element.
This will later allow for an evaluation of the table without touching precise orders.
\begin{defi}
	Let $V \subseteq \WR$ and $v \in \overline{V}$.
	The \emph{coherence graphs} of $V$ and $v$ are defined by
	\begin{align*}
		\graph{loc}[V,v] &= (O, \poloc \cup \rf \cup \rorder[V,v] \cup \cf[V,v]), \\
		\graph{mm}[V,v] &= (O, \pomm \cup \rfmm \cup \rorder[V,v] \cup \cf[V,v]).
	\end{align*}
\end{defi}
In the definition, relation $\rorder[V,v]$ expresses that $\overline{V \cup \setcon{v}}$ is smaller than $V \cup \setcon{v}$ and that $v$ is the minimal element in $V \cup \setcon{v}$.
Formally, it is given by $\rorder[V,v] = \rorder[V \cup \setcon{v}] \cup \Setcon{(v,w)}{w \in V}$.
The conflict relation is defined by $\cf[V,v] = \rf^{-1} \circ \bigcup_{x \in \Var} \rorder[V,v]_x$.

Coherence graphs are key for the recursion among the entries of $T$.
Assume we are given a snapshot order $\tw[V]$ on $V$ meeting the acyclicity requirements of $T$ and we extend it to a snapshot order $\tw[V']$ on $V' = V \cup \setcon{v}$, as above - by inserting $v$ as minimal element of $V'$.
We show that each potential cycle in $\graph{loc}(\tw[V'])$ or $\graph{mm}(\tw[V'])$ either implies a cycle in a coherence graph $\graph{loc}[V,v]$ or $\graph{mm}[V,v]$ or in one of the graphs $\graph{loc}(\tw[V])$ or $\graph{mm}(\tw[V])$.
If $T[V] = 1$, we can assume the latter graphs to be acyclic.
Moreover, if we have checked that the coherence graphs are acyclic as well, we obtain that $T[V'] = 1$.
Hence, a recursion should check whether $T[V] = 1$ and whether the corresponding coherence graphs are acyclic.

We formulate the recursion in the subsequent lemma.
Note that it is a top-down formulation that only refers to non-empty subsets of write events. 
An evaluation of the base case is immediate. 
Entry $T[\emptyset]$ is evaluated to $1$ if $\graph{loc}(\emptyset) = (O, \poloc \cup \rf)$ and $\graph{mm}(\emptyset) = (O,\pomm \cup \rfmm)$ are both acyclic.
Otherwise it is evaluated to $0$.
\begin{lemma}
	\label{Lemma:Recursion}
	Let $V \subseteq \WR$ be a non-empty subset.
	Entry $T[V]$ admits the following recursion:
	\begin{align*}
		T[V] = \bigvee_{v \in V} 
		\left(\graph{loc}[V \! \setminus \! \setcon{v},v] ~\text{acyclic}\right) \wedge
		\left(\graph{mm}[V \! \setminus \! \setcon{v},v] ~\text{acyclic}\right)
		\wedge
		T[V \! \setminus \! \setcon{v}].
	\end{align*}
\end{lemma}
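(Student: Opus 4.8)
The plan is to prove the biconditional by relating snapshot orders on $V$ to snapshot orders on $V \setminus \setcon{v}$ via the single operation of inserting, respectively deleting, $v$ as the minimal element of the total-order part of the snapshot order. Throughout I write $\tw[V] = \torder[V] \cup \rorder[V]$ and use that all graphs in play differ only in their $\tw$- and $\cf$-components, since $\poloc, \rf$ (resp.\ $\pomm, \rfmm$) are fixed; I also exploit that in $\graph{loc}$ the relations $\poloc, \rf, \cf[\cdot]$ are same-variable, so the only cross-variable edges are $\tw$-edges. I expect the direction that \emph{builds up} the larger order to carry all the difficulty.

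For the direction $T[V] = 1 \Rightarrow$ the disjunction, I take a witnessing snapshot order $\tw[V]$, let $v$ be the $\torder[V]$-minimal element of $V$, and show the $v$-th disjunct. Restricting $\torder[V]$ to $V \setminus \setcon{v}$ and declaring $v$ below $V \setminus \setcon{v}$ gives a snapshot order $\tw[V \setminus \setcon{v}] \subseteq \tw[V]$ (minimality of $v$ turns its $\torder[V]$-successor pairs into legal $\rorder$-pairs), hence $\cf[V \setminus \setcon{v}] \subseteq \cf[V]$, so $\graph{loc}(\tw[V \setminus \setcon{v}])$ and $\graph{mm}(\tw[V \setminus \setcon{v}])$ are subgraphs of acyclic graphs and thus acyclic, giving $T[V \setminus \setcon{v}] = 1$. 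Likewise $\rorder[V \setminus \setcon{v}, v] \subseteq \tw[V]$, so $\cf[V \setminus \setcon{v}, v] \subseteq \cf[V]$ and both coherence graphs are subgraphs of acyclic graphs. This direction is entirely a matter of edge containment.

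For the converse, fix $v$ with $T[V \setminus \setcon{v}] = 1$ witnessed by $\tw[V \setminus \setcon{v}]$ and with $\graph{loc}[V \setminus \setcon{v}, v]$, $\graph{mm}[V \setminus \setcon{v}, v]$ acyclic; I extend to $\tw[V]$ by inserting $v$ as the new minimal element and must prove the two big graphs acyclic. The key bookkeeping is that the only edges of $\graph{loc}(\tw[V])$ absent from $\graph{loc}(\tw[V \setminus \setcon{v}])$ are edges \emph{into} $v$ — the new $\tw$-pairs $(\overline{u}, v)$ with $\overline{u}$ in the complement and the induced $\cf$-pairs $(r, v)$ with $r$ reading from such $\overline{u}$ — and that every edge incident to $v$ (new or old) already lies in the coherence graph $\graph{loc}[V \setminus \setcon{v}, v]$; symmetrically for $\graph{mm}$. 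Taking a hypothetical \emph{simple} cycle, if it avoids $v$ it lies in the acyclic reduced graph, a contradiction; so it meets $v$ once, enters by $(p, v)$ and leaves by $(v, q)$, and the entering edge must be new (otherwise the whole cycle is old), placing $p$ in the complement or making $p$ a read from the complement. It then remains to reroute the detour $p \to v \to q$: when $p$ is a complement write and $q$ a write of $V$ this is a single old edge $(p, q)$ (the complement lies $\tw$-below all of $V$), closing a cycle in the acyclic reduced graph; the other configurations are absorbed into the acyclic coherence graph. The same case analysis applies verbatim to $\graph{mm}$ after replacing $\poloc, \rf$ by $\pomm, \rfmm$.

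The technical heart, and the step I expect to be the main obstacle, is exactly this rerouting, and the obstruction is the conflict relation. Both auxiliary graphs drop the internal order $\torder[V]$ of $V$, yet a cycle may traverse $V$ internally through $\tw$- and $\cf$-edges living in neither; since descending from $V$ back to the low vertex $p$ is only possible through $\tw$-edges in $\graph{loc}$ (and through $\pomm, \rfmm$ in $\graph{mm}$), one must argue that each such descent, together with the minimality of $v$ and the fact that $\cf[\cdot]$ is \emph{induced} by the very $\tw$-pairs being tracked, can always be realised inside one of the two acyclic graphs. Checking this uniformly for $\graph{loc}$ and, with the weaker relations $\pomm, \rfmm$, for $\graph{mm}$ is where the bulk of the case work lies.
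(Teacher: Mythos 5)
Your first direction and the skeleton of the second match the paper's proof: you pick the $\torder[V]$-minimal $v$ (resp.\ insert $v$ as new minimal element), and your bookkeeping is accurate — indeed $\tw[V] = \tw[V\setminus\setcon{v}] \cup \inc(v)$ with $\inc(v) = \Setcon{(\overline{w},v)}{\overline{w} \in \overline{V}}$, so the only edges of $\graph{loc}(\tw[V])$ missing from $\graph{loc}(\tw[V\setminus\setcon{v}])$ point into $v$, every edge incident to $v$ lies in the coherence graph, a simple cycle avoiding $v$ lives in the acyclic smaller graph, and your shortcut $(p,q)$ for $p \in \overline{V}$, $q \in V\setminus\setcon{v}$ correctly closes a cycle there. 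This is exactly the paper's intermediary graph $\Jgraph{loc}[V,v]$ with edges $\poloc \cup \rf \cup \tw[V\setminus\setcon{v}] \cup \inc(v) \cup \cf[V\setminus\setcon{v}] \cup \cfinc(v)$, analyzed by the same single-visit-of-$v$ case split.

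However, there is a genuine gap, and it sits precisely where you flag it: the claim that ``the other configurations are absorbed into the acyclic coherence graph'' does not go through as stated, and you offer no argument for it — only the remark that ``one must argue'' it. The obstruction is concrete: the segment of the cycle from $q$ back to $p$ may traverse edges of $\torder[V\setminus\setcon{v}]$ and the conflict edges they induce, which lie in $\graph{loc}(\tw[V\setminus\setcon{v}])$ but \emph{not} in $\graph{loc}[V\setminus\setcon{v},v]$, so the cycle cannot simply be copied into the coherence graph. The missing idea — the heart of the paper's proof — is a rerouting based on the observation that every such offending edge ends at a write $w' \in V\setminus\setcon{v}$ (a $\torder$ edge ends in $V\setminus\setcon{v}$, and an induced $\cf$ edge ends at the target of a $\torder$ pair), and minimality of $v$ supplies the direct coherence edge $(v,w') \in \rorder[V\setminus\setcon{v},v]$. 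One therefore takes the \emph{last} offending edge along the cycle and replaces the entire prefix from $v$ up to and including it by the single edge $(v,w')$; the remaining suffix from $w'$ back to $v$, ending in the retained incoming edge from $\inc(v) \subseteq \rorder[V\setminus\setcon{v},v]$ or $\cfinc(v) \subseteq \cf[V\setminus\setcon{v},v]$, consists only of coherence-graph edges, yielding a non-empty cycle through $v$ in $\graph{loc}[V\setminus\setcon{v},v]$ — the desired contradiction, and verbatim for $\graph{mm}$ with $\pomm,\rfmm$ in place of $\poloc,\rf$. (The paper implements this as an iterative reconstruction that restarts the partial cycle at $v$ whenever an offending edge is met; note this subsumes your shortcut case, so no old-graph cycle is even needed.) Separately, your remark that descending from $V$ is ``only possible through $\tw$-edges in $\graph{loc}$'' is false — $\rf$ and $\poloc$ edges also leave writes of $V$ — but the corrected argument nowhere relies on it.
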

We interpret $\left(\graph{loc}[V \! \setminus \! \setcon{v},v] ~\text{\emph{acyclic}}\right)$ as a predicate evaluating to $1$ if the graph is acyclic, to $0$ otherwise.
Hence, the recursion requires the existence of an $v \in V$ such that both coherence graphs are acyclic and $T[V \! \setminus \! \setcon{v}]$ evaluates to $1$.
A proof of Lemma \ref{Lemma:Recursion} is given in Appendix \ref{Section:ProofsFramework}.

With the recursion at hand we can evaluate the table $T$ by a dynamic programming.
To this end, we store already computed entries and look them up when needed.
An entry $T[V]$ is evaluated as follows.
We branch over all write events $v \in V$ and test whether the coherence graphs $\graph{loc}[V \! \setminus \! \setcon{v},v]$ and $\graph{mm}[V \! \setminus \! \setcon{v},v]$ are acyclic.
Then, we look up whether $T[V \! \setminus \setcon{v}] = 1$.
If all three queries are positive, we store $T[V] = 1$.
Otherwise, $T[V] = 0$.

The complexity estimation of Theorem \ref{Theorem:MMConsistency} is obtained as follows.
The table has $2^k$ many entries that we evaluate, which constitutes the exponential factor.
For each entry $T[V]$, we branch over at most $k$ write events $v \in V$.
Looking up the value of $T[V \! \setminus \! \setcon{v}]$ can be done in constant time.
The following lemma shows that $\bigO(k \cdot n^2)$ time suffices to construct the coherence graphs and to check them for acyclicity.
The latter checks are based on Kahn's algorithm \cite{Kahn1962} for finding a topological sorting.
This completes the proof of Theorem~\ref{Theorem:MMConsistency}.
\begin{lemma}
	\label{Lemma:CoherenceAcyclicity}
	Let $V \subseteq \WR$ and $v \in \overline{V}$.
	Constructing the coherence graphs $\graph{loc}[V,v]$ and $\graph{mm}[V,v]$ and testing both for acyclicity can be done in time $\bigO(k \cdot n^2)$.
\end{lemma}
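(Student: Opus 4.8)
The plan is to separate the lemma's two tasks—building the coherence graphs and testing them for acyclicity—and to show that each graph is a digraph on $n$ vertices with only $\bigO(n^2)$ edges, so that a single topological sort settles acyclicity. Both $\graph{loc}[V,v]$ and $\graph{mm}[V,v]$ share the vertex set $O$ of size $n$, and their edge sets are unions of four relations. Two of these—$\poloc,\rf$ for $\graph{loc}[V,v]$ and $\pomm,\rfmm$ for $\graph{mm}[V,v]$—are fixed subrelations of the history and the model. Each has at most $\bigO(n^2)$ edges and, since $\pomm,\rfmm$ are determined by $\MM$ as subrelations of $\po,\rf$, all four can be read off $h$ in time $\bigO(n^2)$, independently of $V$ and $v$. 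The relation $\rorder[V,v]$ relates only write events; from its definition $\rorder[V,v] = \rorder[V \cup \setcon{v}] \cup \Setcon{(v,w)}{w \in V}$ it has at most $\bigO(k^2)$ edges and is generated in time $\bigO(k^2)$ after a single $\bigO(k)$ pass that records membership of each write in $V \cup \setcon{v}$.

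The delicate step—and the one I expect to be the main obstacle—is the conflict relation $\cf[V,v] = \rf^{-1} \circ \bigcup_{x} \rorder[V,v]_x$, since a blind composition of two $\bigO(n^2)$-sized relations costs $\bigO(n^3)$. I would avoid this by exploiting that $\rf$ is functional on reads: every read $r$ has a unique writer $w'$ with $(w',r) \in \rf$. To enumerate all pairs $(r,w) \in \cf[V,v]$ I iterate over the at most $n$ reads, look up $w'$, and list the writes $w$ on $\var(r)$ with $(w',w) \in \rorder[V,v]$; there are at most $k$ such $w$ per read, so $\cf[V,v]$ has at most $\bigO(nk)$ edges and is built in time $\bigO(nk)$. (Even a cruder realization that scans the $\bigO(k^2)$ pairs of $\rorder[V,v]$ for each read costs $\bigO(nk^2) \le \bigO(k\cdot n^2)$, so the claimed bound is robust to the implementation here.) Taking the union of the four relations then gives, for each graph, a digraph on $n$ vertices with $\bigO(n^2)$ edges, assembled in total time $\bigO(n^2)$.

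For acyclicity I would run Kahn's algorithm \cite{Kahn1962}: repeatedly delete a source (a vertex of in-degree $0$) together with its outgoing edges, declaring the graph acyclic exactly when all vertices have been removed, and reporting a cycle otherwise. This runs in time $\bigO(\abs{O} + \abs{E}) = \bigO(n + n^2) = \bigO(n^2)$ per graph. Summing the construction and the acyclicity test over both $\graph{loc}[V,v]$ and $\graph{mm}[V,v]$ yields $\bigO(n^2)$, which lies comfortably within the stated bound $\bigO(k \cdot n^2)$. Once the functional structure of $\rf$ is used to keep $\cf[V,v]$ small, the remaining estimates on edge counts and on Kahn's algorithm are routine, so the only genuine care needed is in the conflict-relation composition.
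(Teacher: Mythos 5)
Your proposal is correct and follows essentially the same route as the paper's proof: construct the four edge relations one by one, bound their sizes (with $\rorder[V,v]$ of size $\bigO(k^2)$ and the conflict relation as the only delicate step), and decide acyclicity of both graphs with Kahn's algorithm in time linear in vertices plus edges. The single point of divergence is the composition $\cf[V,v] = \rf^{-1} \circ \bigcup_{x \in \Var} \rorder[V,v]_x$: the paper performs the naive join over the $\bigO(n)$ edges of $\rf^{-1}$ and the $\bigO(k \cdot n)$ edges of $\bigcup_{x \in \Var} \rorder[V,v]_x$, costing $\bigO(k \cdot n^2)$ --- which is precisely where the lemma's bound originates --- whereas your exploitation of the functionality of $\rf$ (exactly one writer per read, consistent with the paper's own observation that $\rf$ has one edge per read event) tightens this step to $\bigO(n \cdot k)$ and even improves the per-call total to $\bigO(n^2)$, a strengthening the lemma's statement does not require but comfortably absorbs.
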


\section{Instances of the Framework}
\label{Section:Instantiations}

We show the applicability of our framework and obtain consistency algorithms for the memory models $\SC$, $\TSO$, $\PSO$, and $\RMO$.
To this end, we first need to show that our notion of consistency coincides with the notion of consistency used in the literature for these models.
This ensures that the obtained algorithms really solve the correct problem.
Once this is achieved, we can directly apply the framework to $\SC$, $\TSO$, and $\PSO$. For $\RMO$, we show how the framework can be slightly modified to also capture this more relaxed model.

\subsection{Validity}
\label{Section:Validity}

Consistency, as it is considered in the literature, is also known as \emph{validity} \cite{Alglave2012,Alglave2014}.
We use the latter name to avoid confusion with our notion of consistency.
Before we show that both notions actually coincide, we formally define validity.
The definition is based on \emph{store orders}~\cite{Alglave2012,Alglave2014,Bouajjani2019} (also known as \emph{coherence orders}).
Given a history $h = \history$, a \emph{store order} $\ww \subseteq \WR \! \times \! \WR$ takes the form $\ww = \bigcup_{x \in \Var} \ww_x$ so that each $\ww_x$ is a strict total order on $\WR(x)$.
Phrased differently, store orders are unions of total orders on writes to the same variable.
Note that, in contrast to a total order on $\WR$, a store order does not have any edge between write events referring to distinct variables.

Validity is similar to consistency.
But instead of a total order, the acyclicity requirements need to be satisfied by a store order.
Let $\MM$ be a memory model described by $(\pomm,\rfmm)$.
A history $h = \history$ is $\MM$\emph{-valid} if there exists a store order so that
\begin{align*}
	\graph{loc}^{\ww} = (O, \poloc \cup \rf \cup \ww \cup \fr) 
	\ \ \text{and} \ \ 
	\graph{mm}^{\ww} = (O, \pomm \cup \rfmm \cup \ww \cup \fr)
\end{align*}
are acyclic.
The \emph{from-read} relation is defined by $\fr = \rf^{-1} \circ ww$.
Note that the definition, as in the case of consistency above, omits checking for out of thin air values.
We will later add an explicit test for memory models that require it.
This will not affect the complexity.

We show the equivalence of validity and consistency.
To this end, we need to prove that a store order can be replaced by a total order on the write events while acyclicity is preserved.
The following lemma states the result.
It is crucial for the applicability of our framework.
\begin{lemma}
	\label{Lemma:ValidConsistency}
	A history $h$ is $\MM$-valid if and only if it is $\MM$-consistent.
\end{lemma}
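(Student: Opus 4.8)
**The plan is to prove both directions by explicitly converting between a store order $\ww$ and a total order $\tw$, showing that the conflict relation $\cf$ and the from-read relation $\fr$ coincide under this correspondence, and then transferring cycles between the two pairs of graphs.**

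First I would handle the easier direction, that $\MM$-consistency implies $\MM$-validity. Given a witnessing total order $\tw$ on $\WR$, I define a store order by restricting to each variable: $\ww = \bigcup_{x \in \Var} \tw_x$. Since $\tw$ is a strict total order on all of $\WR$, each $\tw_x$ is a strict total order on $\WR(x)$, so $\ww$ is a legitimate store order. The key observation is that the two derived relations match exactly: by definition $\fr = \rf^{-1} \circ \ww = \rf^{-1} \circ \bigcup_{x} \tw_x = \cf$. Because $\ww \subseteq \tw$ and $\fr = \cf$, each edge set of $\graph{loc}^{\ww}$ and $\graph{mm}^{\ww}$ is a subset of the corresponding edge set of $\graph{loc}$ and $\graph{mm}$. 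Thus any cycle in the validity graphs is already a cycle in the consistency graphs, and acyclicity of the latter gives acyclicity of the former.

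The converse direction, that $\MM$-validity implies $\MM$-consistency, is the main obstacle: here I must \emph{construct} a total order $\tw$ on all of $\WR$ from a store order $\ww$ that orders writes only within each variable, and I must do so without creating new cycles. The natural strategy is to build a directed graph whose vertices are $\WR$ and whose edges force $\tw$ to extend $\ww$ while remaining consistent with the ordering constraints already implied by the acyclic graphs $\graph{loc}^{\ww}$ and $\graph{mm}^{\ww}$; then I take $\tw$ to be any topological sort (linearization) of this graph. Concretely, I would consider the relation induced on $\WR$ by the acyclic structure---writes $w$ and $w'$ with distinct variables still inherit order constraints through paths in $\graph{loc}^{\ww}$ or $\graph{mm}^{\ww}$ that pass through reads and program-order edges---and argue this relation is acyclic, so a linearization exists. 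The delicate point is verifying that after choosing $\tw \supseteq \ww$, the newly derived conflict relation $\cf = \rf^{-1} \circ \bigcup_x \tw_x$ does not exceed $\fr$: since $\tw_x$ and $\ww_x$ coincide on each variable $x$ (both are the unique total order on $\WR(x)$ forced by $\ww$), we get $\cf = \fr$ again, so no spurious conflict edges appear.

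The crux is therefore to show that the cross-variable edges added when linearizing $\ww$ into $\tw$ cannot close a cycle in $\graph{loc}(\tw)$ or $\graph{mm}(\tw)$. The clean way to see this is to observe that $\tw$ agrees with $\ww$ variable-by-variable, hence $\cf = \fr$ exactly, so the only difference between $\graph{loc}(\tw)$ and $\graph{loc}^{\ww}$ (and likewise for $\mathit{mm}$) is the extra cross-variable $\tw$-edges. I would argue that these extra edges can be chosen to respect a topological order of the acyclic graph $\graph{loc}^{\ww}$ restricted-and-projected onto $\WR$: by contracting each read onto the writes it connects and taking a global linearization of the combined partial order generated by $\ww$ together with the write-to-write reachability in both validity graphs, any total order $\tw$ extending this partial order keeps both graphs acyclic. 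This reduces the whole problem to the standard fact that a partial order extends to a total order (topological sorting), with the nontrivial content being that the partial order we feed in is genuinely acyclic---which follows precisely from the assumed acyclicity of $\graph{loc}^{\ww}$ and $\graph{mm}^{\ww}$.
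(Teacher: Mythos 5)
Your easy direction (consistency implies validity) and your observation that $\ww \subseteq \tw$ forces $\tw_x = \ww_x$ for every variable $x$, hence $\cf = \fr$, are exactly right and coincide with the paper's argument. The gap sits in the hard direction, at the step you yourself identify as the crux: you claim that the combined partial order generated by $\ww$ together with the write-to-write reachability of \emph{both} validity graphs is acyclic, and that this ``follows precisely from the assumed acyclicity of $\graph{loc}^{\ww}$ and $\graph{mm}^{\ww}$.'' That inference is invalid as stated: the union of two individually acyclic relations can perfectly well be cyclic (one graph may force $w$ before $w'$, the other $w'$ before $w$), so acyclicity of each graph separately does not license a joint topological sort. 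Your sketch never rules out such a conflict between loc-induced and mm-induced constraints, and without that the linearization you need is not guaranteed to exist.

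What rescues the construction --- and is precisely the content of the paper's auxiliary Lemma~\ref{Lemma:GlocLinearization} --- is a structural fact about $\graph{loc}^{\ww}$ that your proposal never states or uses: each of its edge relations ($\poloc$, $\rf$, $\ww$, $\fr$) relates only events on a common variable, so the graph decomposes into a disjoint union of per-variable components. Hence any path between two distinct writes stays within one variable class, and since $\ww_x$ is total on $\WR(x)$ and each component is acyclic, the write-to-write reachability of $\graph{loc}^{\ww}$ is already contained in $\ww$ itself. The loc side therefore contributes no ordering constraints beyond $\ww$, which is already part of $\graph{mm}^{\ww}$, and your combined relation reduces to mm-reachability, which \emph{is} acyclic. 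With this observation your plan can be completed (you would still need to spell out the cycle-collapsing argument showing that any linearization keeps both graphs acyclic, which you currently assert without proof --- it is the other half of Lemma~\ref{Lemma:GlocLinearization}). The paper sidesteps the joint-linearization issue altogether: it takes a linear extension $L$ of $(\pomm \cup \rfmm \cup \ww \cup \fr)^+$ over all of $O$ and sets $\tw = L \cap \WR \times \WR$, so acyclicity of $\graph{mm}$ is immediate because all its edges lie in $L$, and only $\graph{loc}$ requires the decomposition lemma, which holds for \emph{any} total order extending $\ww$.
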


Before we give the proof of Lemma \ref{Lemma:ValidConsistency}, we need an auxiliary statement.
It shows that a store order $\ww$ in $\graph{loc}^\ww$ can be replaced by any linearization of $\ww$ without affecting acyclicity.
Phrased differently, any total order $\tw$ on the write events that contains $\ww$ can be inserted into the graph $\graph{loc}^\ww$ - it will still be acyclic.
We state the corresponding lemma.
\begin{lemma}
	\label{Lemma:GlocLinearization}
	Let $h = \history$ be a history, $\ww$ a store order, and $\tw$ a total order on $\WR$ such that $\ww \subseteq \tw$.
	If $\graph{loc}^\ww$ is acyclic, then so is $\graph{loc}^\tw = (O, \poloc \cup \rf \cup \tw \cup \fr)$.
\end{lemma}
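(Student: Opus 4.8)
The plan is to argue the contrapositive: assuming $\graph{loc}^\tw$ contains a cycle, I will derive a contradiction either with the acyclicity of $\graph{loc}^\ww$ or with the irreflexivity of $\tw$. The starting observation is that every edge of $\graph{loc}^\ww$ joins two events on the \emph{same} variable: this holds for $\poloc$ by definition, for $\rf$ since a read and its writer share a variable, for $\ww = \bigcup_x \ww_x$ by construction, and therefore also for $\fr = \rf^{-1} \circ \ww$. Consequently $\graph{loc}^\ww$ splits into independent per-variable subgraphs, and the only edges that $\graph{loc}^\tw$ adds over $\graph{loc}^\ww$ are the pairs in $\tw \setminus \ww$; since $\ww$ already totally orders the writes on each variable in agreement with $\tw$, these new edges all run between writes on \emph{distinct} variables. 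I will call the edges of $\graph{loc}^\ww$ \emph{old} and the edges of $\tw \setminus \ww$ \emph{new}.

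Now fix a cycle $C$ in $\graph{loc}^\tw$. If $C$ uses no new edge it already lies inside $\graph{loc}^\ww$, contradicting its acyclicity; so $C$ traverses new edges $(a_1,b_1),\dots,(a_p,b_p)$ in cyclic order, with $a_i \tw b_i$ and all $a_i,b_i \in \WR$. These new edges cut $C$ into arcs, where the arc following $(a_i,b_i)$ is an old path from $b_i$ to $a_{i+1}$ (indices mod $p$). The key structural point is that such an old path, being a connected chain of same-variable edges, stays within a single variable; hence its endpoints $b_i$ and $a_{i+1}$ are writes on one common variable $x_i$ (in the degenerate case of two adjacent new edges the path is empty and $b_i = a_{i+1}$).

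The heart of the argument is the following claim: for each $i$ we have $b_i = a_{i+1}$ or $b_i \ww a_{i+1}$. Indeed, $\ww_{x_i}$ totally orders $\WR(x_i)$, so the only alternative is $a_{i+1} \ww b_i$ with $a_{i+1} \neq b_i$; but then the old path from $b_i$ to $a_{i+1}$, together with the edge $a_{i+1} \to b_i$ (an old edge, as $\ww$ is part of $\graph{loc}^\ww$), forms a cycle in $\graph{loc}^\ww$, a contradiction. With the claim in hand I use $\ww \subseteq \tw$ to turn each arc into $b_i \tw a_{i+1}$ or $b_i = a_{i+1}$, and chain these with the strict steps $a_i \tw b_i$ around the cycle. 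By transitivity of $\tw$, and since each new edge contributes at least one strict step, this closes up to $a_1 \tw a_1$, contradicting irreflexivity of $\tw$.

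I expect the main obstacle to be the bookkeeping in the cycle decomposition: justifying that consecutive new edges are separated by old paths confined to a single variable (including the degenerate case $b_i = a_{i+1}$ of adjacent new edges and the case $p=1$), and then verifying that the alternating strict/non-strict chain genuinely collapses to a strict self-relation. Once the single-variable confinement and the totality of $\ww_{x_i}$ are pinned down, the contradiction follows purely from transitivity, so no further computation is needed.
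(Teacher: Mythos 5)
Your proposal is correct and takes essentially the same route as the paper's proof: both decompose $\graph{loc}^\ww$ into disjoint per-variable subgraphs, note that $\tw_x = \ww_x$ forces all new edges of $\graph{loc}^\tw$ to be cross-variable write-write pairs, and then shorten any alleged cycle by replacing each same-variable arc between writes with a $\ww$-edge (justified by totality of $\ww_{x}$ on $\WR(x)$ together with the acyclicity of $\graph{loc}^\ww$), yielding a $\tw$-cycle that contradicts strictness. Your explicit treatment of the degenerate cases (adjacent new edges with $b_i = a_{i+1}$, and $p = 1$) is slightly more careful than the paper's write-up but does not change the argument.
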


The proof of Lemma \ref{Lemma:GlocLinearization} is given in Appendix \ref{Section:ProofsInstantiations}.
We turn to the proof of Lemma \ref{Lemma:ValidConsistency}.
\begin{proof}[Proof of Lemma \ref{Lemma:ValidConsistency}]
	First assume that $h = \history$ is $\MM$-valid.
	Then there is a store order $\ww$ such that $\graph{loc}^\ww$ and $\graph{mm}^\ww$ are acyclic.
	Consider the edges of the latter graph.
	They form a relation $\ordmm = \pomm \cup \rfmm \cup \ww \cup \fr$.
	Since $\graph{mm}^\ww$ is acyclic, the transitive closure $\ordmm^+$ is a strict partial order on $O$.
	Hence, there exists a linear extension, a strict total order $L$ containing $\ordmm^+$.
	We define $\tw = L \cap \WR \! \times \! \WR$.
	Then, $\tw$ is a total order on $\WR$ and we have $\ww \subseteq L \cap \WR \! \times \! \WR = \tw$.
	We show that $\graph{loc}$ and $\graph{mm}$ are acyclic.
	Note that the latter refer to the graphs from the definition of consistency.

	The store order $\ww$ is contained in $\tw$.
	Hence, we obtain that $\ww_x \subseteq \tw_x$ for each variable $x \in \Var$.
	This implies that $\ww_x = \tw_x$ since $\ww_x$ is total on $\WR(x)$.
	We can deduce $\ww = \bigcup_{x \in \Var} \ww_x = \bigcup_{x \in \Var} \tw_x$ and thus $\cf = \rf^{-1} \circ \bigcup_{x \in \Var} \tw_x = \rf^{-1} \circ \ww = \fr$.
	
	Since $\fr = \cf$, we get the acyclicity of $\graph{loc} = \graph{loc}^\tw$ from Lemma \ref{Lemma:GlocLinearization}.
	The acyclicity of $\graph{mm}$ follows since its edges $\pomm \cup \rfmm \cup \tw \cup \cf$ form a subrelation of $L$.
	A cycle would mean that $L$ has a reflexive element, but $L$ is a strict order.
	Hence, $h$ is $\MM$-consistent.
	
	For the other direction, assume that $h$ is $\MM$-consistent.
	By definition, there is a total order $\tw$ on $\WR$ such that $\graph{loc}$ and $\graph{mm}$ are acyclic.
	We construct the store order $\ww = \bigcup_{x \in \Var} \tw_x$.
	Note that, since $\tw_x$ is total on $\WR(x)$, $\ww$ is indeed a store order and we have $\ww \subseteq \tw$.
	We show that $\graph{loc}^\ww$ and $\graph{mm}^\ww$ are acyclic.
	In fact, we have that $\fr = \rf^{-1} \circ \ww = \cf$.
	This implies that $\graph{loc}^\ww$ and $\graph{mm}^\ww$ are subgraphs of $\graph{loc}$ and $\graph{mm}$, respectively.
	Hence, the two graphs are acyclic and $h$ is $\MM$-valid.
	\qedhere
\end{proof}

\subsection{Instances}
\label{Section:Instances}

We apply the algorithmic framework to the mentioned memory models and obtain (optimal) deterministic algorithms for their corresponding validity/consistency problem.
To this end, we employ the formal description of these models given in \cite{Alglave2012,Alglave2014}.

\subparagraph*{Sequential Consistency.}
\emph{Sequential Consistency} ($\SC$) is a basic memory model, first defined by Lamport in \cite{Lamport1979}.
Intuitively, $\SC$ strictly follows the given program order and flushes each issued write immediately to the memory so that it is visible to all other threads.

Formally, $\SC$ is described by the tuple $\SC = (\posc,\rfsc)$ with $\posc = \po$ and $\rfsc = \rf$.
Hence, it employs the full program order and reads-from relation, making the uniprocessor test on $\graph{loc}$ obsolete.
However, our framework still applies. 
It yields an algorithm for the corresponding validity/consistency problem running in time $\bigO(2^k \cdot k^2 \cdot n^2)$.
We show in Section \ref{Section:LowerBounds} that the obtained algorithm is optimal under $\ETH$.

\subparagraph*{Total Store Ordering.}
The SPARC memory model \emph{Total Store Order} ($\TSO$) \cite{Sparc} resembles a more relaxed memory behavior.
Instead of flushing writes immediately to the memory, like in $\SC$, each thread has an own FIFO buffer and issued writes of that thread are pushed into the buffer.
Writes in the buffer are only visible to the owning thread.
If the owner reads a certain variable, it first looks through the buffer and reads the latest issued write on that variable.
This is called \emph{early read}.
At some nondeterministic point, the buffer is flushed to the memory, making the writes visible to other threads as well.

The formal description of $\TSO$ is given by the tuple $\TSO = (\potso,\rftso)$, where $\potso = \po \! \setminus \! \WR \! \times \! \RD$ is a relaxation of the program order, containing no write-read pairs.
The relation $\rftso = \rf_e$ is a restriction of $\rf$ to write-read pairs from different threads:
\begin{align*}
	\rf_e = \Setcon{(w,r) \in \rf}{(w,r) \notin \po, (r,w) \notin \po}.
\end{align*}
Unlike in the case of $\SC$, we do not have the full program order and reads-from relation at hand.
Hence, the uniprocessor test is essential.
Applying the framework yields an algorithm for the validity/consistency problem of $\TSO$ running in time $\bigO(2^k \cdot k^2 \cdot n^2)$.
The optimality of the obtained algorithm is shown in Section \ref{Section:LowerBounds}.

\subparagraph*{Partial Store Ordering.}
The second SPARC model that we consider is \emph{Partial Store Order} ($\PSO$) \cite{Sparc}.
It is weaker than $\TSO$ since writes to different locations issued by a thread may not arrive at the memory in program order.
Intuitively, in $\PSO$ each thread has a buffer per variable where the corresponding writes to the variable are pushed.
Like for $\TSO$, threads can read early from their buffers and the buffers are, at some point, flushed to the memory.

Formally, $\PSO$ is captured by the tuple $\PSO = (\popso,\rfpso)$.
Here, the relation $\popso = \po \! \setminus \! ( \WR \! \times \! \RD \cup \WR \! \times \! \WR )$ takes away the write-read pairs and the write-write pairs from the program order and, like for $\TSO$, we have $\rfpso = \rf_e$.
Hence, we can apply our framework and obtain an $\bigO(2^k \cdot k^2 \cdot n^2)$-time algorithm.
The obtained algorithm is optimal.

\subparagraph*{Relaxed Memory Order.}
We extend the framework to also capture SPARC's \emph{Relaxed Memory Order} ($\RMO$) \cite{Sparc}.
The model needs an explicit out of thin air test and allows for so-called \emph{load-load hazards}.
We show how both modifications can be built into the framework without affecting the complexity of the resulting consistency algorithm.

The model $\RMO$ relies on an additional \emph{dependency relation} resembling address and data dependencies among events in an execution of a program.
For instance, if a read event has influence on the value written by a subsequent write event.
We assume that the \emph{dependency relation} $\dep$ is given along with a history $h = \history$ and is a subrelation of $\po \cap (\RD \! \times \! O)$.
The latter means that $\dep$ always starts in a read event.
With the relation at hand we can perform an out of thin air test.
In fact, such a test \cite{Alglave2012} requires that $(O,\dep \cup \rf)$ is acyclic.
This can be checked by Kahn's algorithm \cite{Kahn1962} in time $\bigO(n^2)$. 
Hence, the test can be added to the framework without increasing the time complexity of the obtained consistency algorithm.

Load-load hazards are allowed by $\RMO$.
These occur when two reads of the same variable are scheduled not following the program order.
To obtain an algorithm from the framework in this case, we need to weaken the uniprocessor check \cite{Alglave2012}.
In fact, we replace the relation $\poloc$ by $\poloc_\llh = \poloc \setminus \RD \times \RD$ and require that the graph $\graph{loc-llh} = (O, \poloc_\llh \cup \rf \cup \tw \cup \cf)$ is acyclic.
The correctness of the framework is ensured since Lemma \ref{Lemma:ValidConsistency} still holds in this setting.
Moreover, the running time of the resulting algorithm is not affected.

With these modifications, we can obtain a consistency algorithm for $\RMO$.
Formally, $\RMO = (\pormo,\rfrmo)$ where $\pormo = \dep$ and $\rfrmo = \rf_e$.
Applying the framework with out of thin air test and $\graph{loc-llh}$ yields a consistency algorithm running in $\bigO(2^k \cdot k^2 \cdot n^2)$.

\section{Lower Bounds}
\label{Section:LowerBounds}

We show that the framework provides optimal consistency algorithms for $\SC$, $\TSO$, and $\PSO$.
To this end, we employ the $\ETH$ and prove that checking consistency under these three memory models cannot be achieved in subexponential time $2^{o(k)}$.
Since the algorithms obtained in Section \ref{Section:Instantiations} match the lower bound, they are indeed optimal.

We begin with the lower bound for $\SC$-Consistency.
For its proof, we rely on a characterization of the $\ETH$, known as the \emph{Sparsification Lemma} \cite{Impagliazzo2001}.
It states that $\ETH$ is equivalent to the assumption that $\kSAT{3}$ cannot be solved in time $2^{o(n+m)}$, where $n$ is the number of variables and $m$ is the number of clauses of the input formula.
To transport the lower bound to consistency checking, we construct a polynomial-time reduction from $\kSAT{3}$ to $\SC$-Consistency which controls the number of writes $k$.
Technically, for a given formula $\varphi$, the reduction yields a history $h_\varphi$ that has only $k = \bigO(n+m)$ many write events and is $\SC$-consistent if and only if $\varphi$ is satisfiable.
By invoking the reduction, an $2^{o(k)}$-time algorithm for $\SC$-Consistency, would yield an $2^{o(n+m)}$-time algorithm for $\kSAT{3}$, contradicting the $\ETH$.
\begin{theorem}
	\label{Theorem:SCLower}
	$\SC$-Consistency cannot be solved in time $2^{o(k)}$ unless $\ETH$ fails.
\end{theorem}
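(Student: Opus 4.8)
The plan is to transport the $\ETH$ lower bound for $\kSAT{3}$ to $\SC$-Consistency by a polynomial-time many-one reduction that produces only linearly many write events. Concretely, I will invoke the Sparsification Lemma form of $\ETH$, under which $\kSAT{3}$ on $n$ variables and $m$ clauses admits no $2^{o(n+m)}$-time algorithm, and build from any $3$-CNF formula $\varphi$ a history $h_\varphi$ with $k = \bigO(n+m)$ write events such that $h_\varphi$ is $\SC$-consistent if and only if $\varphi$ is satisfiable. Given such a reduction, a hypothetical $2^{o(k)}$-time algorithm for $\SC$-Consistency would, composed with the (polynomial) reduction, decide satisfiability of $\varphi$ in time $2^{o(k)} \cdot \mathit{poly}(\abs{\varphi}) = 2^{o(n+m)}$, contradicting $\ETH$. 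Hence designing the reduction is the entire task.

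For the encoding I would exploit that, for $\SC$, we have $\posc = \po$ and $\rfsc = \rf$, so the $\graph{loc}$ requirement is subsumed by the $\graph{mm}$ requirement, and consistency reduces to the existence of a total order $\tw$ on $\WR$ for which $\graph{mm} = (O, \po \cup \rf \cup \tw \cup \cf)$ is acyclic, where $\cf = \rf^{-1} \circ \bigcup_{x} \tw_x$. The decisive point is that the parameter $k$ counts only \emph{write} events, so \emph{read} events are free; this lets me afford clause gadgets built mostly from reads while keeping the write count at $\bigO(n+m)$. For each Boolean variable $v_i$ of $\varphi$ I would introduce a constant number of writes on a dedicated location whose mutual $\tw$-order is forced, via small $\po$/$\rf$ gadgets, into exactly one of two configurations, which I read off as the truth value of $v_i$. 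For each clause I would attach a read-based gadget that, through the induced conflict edges $\cf$, closes a cycle in $\graph{mm}$ precisely when all three of its literals evaluate to false under the order-encoded assignment.

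I expect the main obstacle to be the design and verification of the clause gadget, because the conflict relation $\cf$ is itself a function of the chosen order $\tw$: the edges that could complete a cycle are exactly those $(r,w)$ for which some $w'$ with $(w',r)\in\rf$ satisfies $(w',w)\in\tw_x$. Arranging that these $\tw$-dependent edges conspire into a cycle exactly for falsified clauses, while never forcing a spurious cycle for an otherwise-satisfying order, is delicate, and both directions of the biconditional must be checked: a satisfying assignment must extend to a total order $\tw$ making $\graph{mm}$ acyclic, and conversely any order with $\graph{mm}$ acyclic must induce a consistent truth value on every clause. Once the gadgets are fixed, it remains routine to confirm that the total number of writes is $\bigO(n+m)$, that $O$ together with $\po$ and $\rf$ is computable in polynomial time, and hence that the reduction meets the stated bounds, completing the proof of Theorem \ref{Theorem:SCLower}.
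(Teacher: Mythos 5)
Your outer framework is exactly the paper's: invoke the Sparsification Lemma so that $\ETH$ rules out $2^{o(n+m)}$-time algorithms for $\kSAT{3}$, build a polynomial-time reduction producing a history $h_\varphi$ with $k = \bigO(n+m)$ writes such that $h_\varphi$ is $\SC$-consistent iff $\varphi$ is satisfiable, and compose. But the proof of the theorem \emph{is} the reduction, and you have not given it: you describe desiderata for the gadgets (``writes whose mutual $\tw$-order is forced into one of two configurations,'' ``a read-based gadget that closes a cycle precisely when all three literals are false'') and then explicitly defer their design, conceding that making the $\tw$-dependent $\cf$-edges ``conspire into a cycle exactly for falsified clauses'' is delicate. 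That delicacy is the entire content of the paper's Lemma~\ref{Lemma:SCLowerCorrectness}, so what remains is not routine verification but the core argument.

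Concretely, two ideas you would need are absent from your sketch. First, the paper does not force the variable writes into a configuration; it places two unconstrained writes $\wri(x,0)$ and $\wri(x,1)$ in separate one-event threads and lets their $\tw$-order \emph{be} the assignment, propagating it to per-literal locations via threads whose write is guarded by a read of $x$ before and after (so the variable's value cannot change while the literal is being evaluated); the clause check is then three two-read threads whose six reads form a cyclic dependency $\rd(\ell_1,1) \rightarrow \rd(\ell_1,0) \rightarrow \rd(\ell_2,1) \rightarrow \cdots \rightarrow \rd(\ell_3,0) \rightarrow \rd(\ell_1,1)$ exactly when all three literals are false. Second, and harder, your forward direction (satisfying assignment $\Rightarrow$ acyclic $\graph{sc}$) requires exhibiting an actual interleaving; the paper achieves this with a two-phase schedule that first writes the \emph{complement} assignment $\bar{v}$ and then overwrites it with $v$, so that for every clause both values of each literal location are available in a compatible order, with a careful case analysis (and a chosen order among the falsified literals) when only one literal of a clause is true. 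Without this write-twice phasing, single-phase gadgets of the kind you outline make the clause threads' reads unschedulable even for satisfying assignments, and the ``if'' direction of your biconditional fails. So the gap is not a wrong approach but a missing construction: the plan is the paper's plan, minus its substance.
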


It is left to construct the reduction.
Let $\varphi$ be a $\kSAT{3}$-instance over the variables $X = \setcon{x_1, \dots, x_n}$ and with clauses $C_1, \dots, C_m$.
Moreover, let $L$ denote the set of literals.
We construct a history $h_\varphi$ the number of writes of which depends linearly on $n+m$.

The main idea of the reduction is to mimic an evaluation of $\varphi$ by an interleaving of the events in $h_\varphi$.
To this end, we divide evaluating $\varphi$ into three steps: (1) choose an evaluation of the variables, (2) evaluate the literals accordingly, and (3) check whether the clauses are satisfied.
For each of these steps we have separate threads taking care of the task.
Scheduling them in different orders will yield different evaluations.
An overview is given in Figure \ref{Figure:LowerBoundSC}.
\begin{figure}[h]
	\begin{tikzpicture}
	\node (Tx0) {$T_0(x):$};
	\node [below = -0.1cm of Tx0] (wrx0) {$\wri(x,0)$};
	
	\node [right = 0.35cm of Tx0] (Tx1) {$T_1(x):$};
	\node [below = -0.1cm of Tx1] (wrx1) {$\wri(x,1)$};
	
	
	\node [right = 1.5cm of Tx1] (Tell0) {$T_0(\ell):$};
	\node [below = -0.1cm of Tell0] (Tell0rdx0a) {$\rd(x,0)$};
	\node [below = -0.1cm of Tell0rdx0a] (Tell0wrell) {$\wri(\ell,c)$};
	\node [below = -0.1cm of Tell0wrell] (Tell0rdx0b) {$\rd(x,0)$};
	
	\node [right = 0.35cm of Tell0] (Tell1) {$T_1(\ell):$};
	\node [below = -0.1cm of Tell1] (Tell1rdx1a) {$\rd(x,1)$};
	\node [below = -0.1cm of Tell1rdx1a] (Tell1wrell) {$\wri(\ell,d)$};
	\node [below = -0.1cm of Tell1wrell] (Tell1rdx1b) {$\rd(x,1)$};
	
	
	\node [right = 1.5cm of Tell1] (TC1) {$T^1(C):$};
	\node [below = -0.1cm of TC1] (TC1rda) {$\rd(\ell_3,0)$};
	\node [below = -0.1cm of TC1rda] (TC1rdb) {$\rd(\ell_1,1)$};
	
	\node [right = 0.35cm of TC1] (TC2) {$T^2(C):$};
	\node [below = -0.1cm of TC2] (TC2rda) {$\rd(\ell_1,0)$};
	\node [below = -0.1cm of TC2rda] (TC2rdb) {$\rd(\ell_2,1)$};
	
	\node [right = 0.35cm of TC2] (TC3) {$T^3(C):$};
	\node [below = -0.1cm of TC3] (TC3rda) {$\rd(\ell_2,0)$};
	\node [below = -0.1cm of TC3rda] (TC3rdb) {$\rd(\ell_3,1)$};
\end{tikzpicture}
	\captionof{figure}{Parts of the history $h_\varphi$ for a variable $x \in X$, a literal $\ell \in L$, and a clause $C = \ell_1 \vee \ell_2 \vee \ell_3$.
	Values of $c$ and $d$ depend on $\ell$.
	If $\ell = x$, then $c = 0, d = 1$. Otherwise, $c = 1, d = 0$.}
	\label{Figure:LowerBoundSC}
\end{figure}

Figure \ref{Figure:LowerBoundSC} presents $h_\varphi$ as a collection of threads.
The program order is obtained from reading threads top to bottom.
The reads-from relation is given since each value is written at most once to a variable.
Hence, there is always a unique write event providing the read~value.

We elaborate on the details of the reduction.
For realizing Step (1), we construct two threads, $T_0(x)$ and $T_1(x)$, for each variable $x \in X$.
These mimic an evaluation of the variable and consist of only one write event.
Thread $T_0(x)$ writes $0$ to $x$, thread $T_1(x)$ writes $1$.
If $T_0(x)$ gets scheduled before $T_1(x)$, variable $x$ is evaluated to $1$ and to $0$ otherwise.
Hence, the thread that is scheduled later will determine the actual evaluation of $x$.

In Step (2), we propagate the evaluation of the variables to the literals.
To this end, we construct two threads for each literal $\ell \in L$.
Let $\ell = x / \neg x$ be a literal on variable $x \in X$.
The first thread $T_0(\ell)$ is responsible for evaluating $\ell$ when $x$ is evaluated to $0$.
It first performs a read event $\rd(x,0)$, followed by $\wri(\ell,c)$ and $\rd(x,0)$.
The value $c$ depends on the literal: if $\ell = x$, then $c = 0$.
Otherwise $c = 1$.
Note that the read events guard the write event.
This ensures that $T_0(\ell)$ can only run if $x$ is already evaluated to $0$ and once $T_0(\ell)$ is running, the evaluation of $x$ cannot change until the thread finishes.
Thread $T_1(\ell)$ behaves similar.
It evaluates the literal $\ell$ when $x$ is evaluated to $1$.
Both threads cannot interfere.
Like for the variables, the later scheduled thread determines the actual evaluation of the literal.

It is left to evaluate the clauses.
For a clause $C = \ell_1 \vee \ell_2 \vee \ell_3$, we have threads $T^1(C)$, $T^2(C)$, and $T^3(C)$ as shown in Figure \ref{Figure:LowerBoundSC}.
It is the task of these threads to ensure that at least one literal in $C$ evaluates to $1$.
To see this, assume we have the contrary, an evaluation of the variables (and the literals) such that $\ell_1$, $\ell_2$, and $\ell_3$ evaluate to $0$.
Due to the construction, $\ell_1$ storing $0$ implies that $\wri(\ell_1,1)$ preceded the write event $\wri(\ell_1,0)$.
Hence, the read event $\rd(\ell_1,1)$ in $T^1(C)$ must have already been scheduled.
In particular, it has to occur before $\rd(\ell_1,0)$ in $T^2(C)$.
Since $\ell_2$ and $\ell_3$ also store $0$, we get a similar dependency among their reads: $\rd(\ell_2,1)$ occurs before $\rd(\ell_2,0)$ and $\rd(\ell_3,1)$ occurs before $\rd(\ell_3,0)$.
Due to program order, we obtain a dependency cycle involving all these reads:
\begin{align*}
	\rd(\ell_1,1) \rightarrow \rd(\ell_1,0) \rightarrow \rd(\ell_2,1) \rightarrow \rd(\ell_2,0) \rightarrow \rd(\ell_3,1) \rightarrow \rd(\ell_3,0) \rightarrow \rd(\ell_1,1).
\end{align*}
An arrow $r \rightarrow r'$ means that $r$ has to precede $r'$ in an interleaving of the events in $h_\varphi$.
Since cycles cannot occur in an interleaving, the threads can only be scheduled properly when a satisfying assignment is given.
The construction of a proper schedule is subtle.
We provide details in Appendix \ref{Section:ProofsLowerBound}.
The following lemma states the correctness of the construction.
\begin{lemma}
	\label{Lemma:SCLowerCorrectness}
	Formula $\varphi$ is satisfiable if and only if the history $h_\varphi$ is $\SC$-consistent.
\end{lemma}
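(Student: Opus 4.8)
The plan is to prove the two directions of the equivalence separately, in each case translating between a satisfying assignment $\beta : X \to \setcon{0,1}$ and a total order $\tw$ on $\WR$ witnessing $\SC$-consistency. Throughout I will use that for $\SC$ we have $\posc = \po$ and $\rfsc = \rf$, so the edges of $\graph{loc}$ form a subset of the edges of $\graph{mm} = (O, \po \cup \rf \cup \tw \cup \cf)$ and it suffices to reason about acyclicity of $\graph{mm}$. I will also exploit that, since each value is written at most once per location, every read is forced to read from a uniquely determined write; thus $\rd(y,v)$ reads from $\wri(y,v)$, and under $\SC$ such a read can only be placed while $\wri(y,v)$ is the $\tw$-latest write to $y$ scheduled so far.

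The first step, common to both directions, is a \emph{literal-consistency} claim: for a literal $\ell$ on a variable $x$, the value written last to the location $\ell$ equals the truth value of $\ell$ under the assignment induced by the last write to $x$. The reason is that the two reads $\rd(x,\cdot)$ guarding the write $\wri(\ell,\cdot)$ inside each of $T_0(\ell)$ and $T_1(\ell)$ force that write to occur while $x$ holds the matching value. Hence if $\beta(x)=1$ the write of $T_1(\ell)$ is $\tw$-later than the write of $T_0(\ell)$, and symmetrically for $\beta(x)=0$; unfolding the definition of $c$ and $d$ then shows the last write to $\ell$ carries exactly the truth value of $\ell$. This makes the correspondence between assignments and last-writes precise.

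For the direction ``consistent $\Rightarrow$ satisfiable'', I take a witnessing $\tw$, define $\beta(x)$ to be the value of the $\tw$-last write to $x$, and show $\beta \models \varphi$. Suppose some clause $C=\ell_1 \vee \ell_2 \vee \ell_3$ is falsified; then by the literal-consistency claim all three locations $\ell_i$ store $0$ last, i.e. $(\wri(\ell_i,1),\wri(\ell_i,0)) \in \tw$. Taking $w' = \wri(\ell_i,1)$ in the definition $\cf = \rf^{-1} \circ \bigcup_{x} \tw_x$ yields a conflict edge $(\rd(\ell_i,1),\wri(\ell_i,0)) \in \cf$, which composed with the $\rf$-edge $(\wri(\ell_i,0),\rd(\ell_i,0))$ gives a path $\rd(\ell_i,1) \to \rd(\ell_i,0)$ in $\graph{mm}$. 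Together with the three program-order edges of the clause threads $T^1(C), T^2(C), T^3(C)$ this closes exactly the six-element cycle displayed before the lemma, contradicting acyclicity of $\graph{mm}$. Hence every clause contains a literal evaluating to $1$, so $\beta$ satisfies $\varphi$.

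For the direction ``satisfiable $\Rightarrow$ consistent'', given $\beta \models \varphi$ I build an interleaving $\pi$ of all events that respects $\po$ and realizes $\rf$ under $\SC$, and set $\tw$ to the restriction of $\pi$ to $\WR$; acyclicity of $\graph{mm}$ is then immediate, since every edge is oriented forward in $\pi$ (for a $\cf$-edge $(r,w)$ with $\rf$-source $w'$, the $\tw$-successor $w$ of $w'$ cannot lie between $w'$ and $r$ in $\pi$, hence follows $r$). The interleaving is assembled per variable: for $x$ with $\beta(x)=1$ I schedule $\wri(x,0)$, then run every $T_0(\ell)$ on $x$, then $\wri(x,1)$, then every $T_1(\ell)$, so each guarded literal write reads the intended value; the case $\beta(x)=0$ is symmetric. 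The delicate point, and the main obstacle, is weaving in the clause-thread reads: a read of a non-final literal value must be placed inside the short window between that literal's two writes while respecting the program order inside each $T^j(C)$. I expect to resolve this by using a satisfied literal $\ell_i$ of each clause as an anchor whose final-value read can be deferred arbitrarily, which breaks the potential cyclic dependency among the six reads and lets the remaining reads be inserted into their windows; checking that this placement is simultaneously consistent for all clauses sharing a literal is the technical heart of the construction and is deferred to the appendix.
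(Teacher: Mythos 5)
Your ``consistent $\Rightarrow$ satisfiable'' direction is sound and essentially identical to the paper's: you define the assignment from the $\tw$-last writes, and your six-read cycle for a falsified clause (via $\cf$-edges $(\rd(\ell_i,1),\wri(\ell_i,0))$ composed with $\rf$-edges) is exactly the paper's contradiction. Your literal-consistency claim is only gestured at --- ``a read can only be placed while its write is $\tw$-latest'' presupposes an interleaving semantics, whereas consistency is defined by acyclicity --- but it is formalizable by the same five-edge cycle the paper exhibits, $\wri(\ell,1) \xrightarrow{\tw} \wri(\ell,0) \xrightarrow{\po} \rd(x,0) \xrightarrow{\cf} \wri(x,1) \xrightarrow{\rf} \rd(x,1) \xrightarrow{\po} \wri(\ell,1)$, which crucially uses the \emph{second} guard read after the literal write; so this is a fixable looseness, not an error.

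The genuine gap is in the ``satisfiable $\Rightarrow$ consistent'' direction, and it occurs at exactly the point you defer ``to the appendix'' --- there is no appendix in your proof, and the deferred step is the one place the construction can fail. Two concrete problems. First, your per-variable assembly (schedule $\wri(x,0)$, all $T_0(\ell)$, $\wri(x,1)$, all $T_1(\ell)$, variable by variable) is broken: a complement-value read of an $x$-literal must land inside $x$'s block, and two clauses can force opposite orders between the blocks of two variables (e.g.\ $C$ containing $\dots \rd(\ell,0) \xrightarrow{\po} \rd(\ell',1) \dots$ with $\ell$ on $x$ true and $\ell'$ on $y$ false demands $x$ before $y$, while a second clause with the roles of $x$ and $y$ swapped demands the reverse). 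The paper avoids this by running \emph{all} complement writes globally before \emph{all} true writes ($\tw = \tw_{first}.\tw_{sec}$). Second, your ``satisfied-literal anchor'' misidentifies the crux: deferring final-value reads to the very end is the easy part. In a clause with exactly one satisfied literal ($v(\ell_1)=1$, $v(\ell_2)=v(\ell_3)=0$), program order forces the \emph{final}-value read $\rd(\ell_2,0)$ to precede the \emph{complement}-value read $\rd(\ell_3,1)$, so both must be threaded into the second phase: one needs $\wri(\ell_2,0)$ scheduled before $\wri(\ell_3,0)$ among the second-phase writes (the paper's cyclic $\Nxt$-ordering on $L(C_i,0)$), with $\rd(\ell_2,0)$ inserted immediately after $\wri(\ell_2,0)$ and $\rd(\ell_3,1)$ before $\wri(\ell_3,0)$. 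Your anchor idea does not break this chain between the two \emph{false} literals, and without the explicit write-ordering and read-insertion argument (the paper's case analysis (1)--(3)), the forward direction is unproven. (Your worry about clauses sharing a literal is moot: the construction uses one location per literal \emph{occurrence}, which is why $k \le 2n + 6m$.)
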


Clearly, $h_\varphi$ can be constructed in polynomial time.
We determine the number of write events.
For each variable $x \in X$ and each literal $\ell \in L$, we introduce two write events.
Hence, $k = 2 \cdot n + 2 \cdot \abs{L}$.
Since there are at most $3 \cdot m$ many literals in $\varphi$, we get that $k$ is bounded by $2 \cdot n + 6 \cdot m$, a number linear in $n+m$.
This finishes the proof of Theorem \ref{Theorem:SCLower}.

We obtain lower bounds for $\TSO$ and $\PSO$, by constructing a similar reduction from $\kSAT{3}$ to $\TSO$ and $\PSO$-Consistency.
To this end, we extend the above reduction by only adding read events that enforce sequential behavior.
Intuitively, we can force the FIFO buffers of $\TSO$ and $\PSO$ to push each issued write to the memory immediately.
Then, the above correctness argument still applies.
The number of write events does not change and is still linear in $n+m$.
This yields the following result.
Details are given in Appendix \ref{Section:ProofsLowerBound}.
\begin{theorem}\label{Thorem:TSOPSOLower}
	$\TSO$ and $\PSO$-Consistency cannot be solved in time $2^{o(k)}$ unless $\ETH$ fails.
\end{theorem}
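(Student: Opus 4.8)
The plan is to transport the $\SC$ lower bound of Theorem~\ref{Theorem:SCLower} to the relaxed models by reusing the reduction behind Lemma~\ref{Lemma:SCLowerCorrectness}. Concretely, from a $\kSAT{3}$-instance $\varphi$ I would build an augmented history $h_\varphi'$ such that $h_\varphi'$ is $\TSO$-consistent (resp.\ $\PSO$-consistent) if and only if $\varphi$ is satisfiable, while keeping the number of writes $k = \bigO(n+m)$ linear. Since only \emph{read} events are added to $h_\varphi$, the count $k = \abs{\WR}$ is unchanged, and the same Sparsification-based argument as for $\SC$ then applies: a $2^{\smallo(k)}$-time algorithm for $\TSO$- or $\PSO$-Consistency would solve $\kSAT{3}$ in time $2^{\smallo(n+m)}$, contradicting $\ETH$.

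The obstacle specific to $\TSO$ and $\PSO$ is that $\potso$ and $\popso$ drop the write-read pairs (and, for $\PSO$, the write-write pairs) from the program order, and that a thread may read its own buffered writes early. The guard reads of the literal threads in $h_\varphi$ and the implicit serialization of the writes therefore no longer follow from program order alone. My device to repair this is to force every buffer to flush immediately. For each write $w : \wri(x,v)$ of $h_\varphi$ I would add a \emph{witness} read $\rd(x,v)$ in a fresh thread; since witnesses live in other threads, the pair $(w, \text{witness})$ falls into $\rfe = \rftso = \rfpso$, so the witness can only read once $w$ has reached the memory. By ordering the witnesses suitably in program order, I can make each write commit at the moment it is issued, reinstating exactly the orderings that $\TSO$ and $\PSO$ relax away.

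It remains to prove correctness, i.e.\ that $\TSO$-/$\PSO$-consistency of $h_\varphi'$ coincides with $\SC$-consistency of $h_\varphi$. One direction is easy: a sequentially consistent interleaving of $h_\varphi$ (which exists by Lemma~\ref{Lemma:SCLowerCorrectness} when $\varphi$ is satisfiable) extends to a legal $\TSO$/$\PSO$ schedule of $h_\varphi'$ by placing each witness immediately after its flushed write, since every $\SC$ behavior is also a $\TSO$/$\PSO$ behavior. The hard direction — and the main obstacle — is the converse: I must show that the witness reads genuinely neutralize both relaxations, so that any store order $\ww$ witnessing $\TSO$-/$\PSO$-consistency of $h_\varphi'$ restores the read-before-read dependencies driving the clause argument of Lemma~\ref{Lemma:SCLowerCorrectness}. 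The delicate points are the early-read feature (which is precisely why the witnesses must be external) and, for $\PSO$, the additional loss of write-write order; the latter is mild here because each core thread of $h_\varphi$ carries at most one write, but it still has to be verified. I expect the argument to hinge on a case analysis showing that the forced flushes make the from-read relation $\fr$ induced by $\ww$ agree with the conflict structure of the sequentially consistent interleaving, so that a cycle in the $\TSO$/$\PSO$ graphs reproduces the dependency cycle that rules out an unsatisfying assignment.

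Once correctness is established, I would conclude as for $\SC$: $h_\varphi'$ is polynomial-time constructible and has $k = \bigO(n+m)$ write events, so a $2^{\smallo(k)}$-time consistency algorithm for either model would refute $\ETH$, completing the proof of Theorem~\ref{Thorem:TSOPSOLower}.
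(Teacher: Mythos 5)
Your overall strategy---add only read events in fresh threads so that $k=\abs{\WR}$ stays linear in $n+m$, then conclude via the same Sparsification argument as for $\SC$---is exactly the paper's, and your easy direction ($\SC$-consistency implies $\TSO$/$\PSO$-consistency under the same $\tw$, since the relaxed graphs are subgraphs) is fine. But the device you commit to does not close the hard direction, and as stated it fails. In the paper's axiomatic setting the relaxation is not an operational buffer waiting to be ``flushed'': it is simply that the edge $\wri(\ell,c)\rightarrow\rd(x,0)$ from the write to the \emph{trailing} guard read inside a literal thread $T_i(\ell)$ is absent from $\potso$ and $\popso$. Adding, for each write $w$, a witness read $r_w$ of $w$ in a fresh thread contributes only the edge $w\xrightarrow{\rfe}r_w$; nothing in $\graph{tso}$ or $\graph{pso}$ connects $w$ (or $r_w$) to the trailing guard read that sits $\po$-after $w$ in the original thread. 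So that guard can still ``drift'' before the write in a linearization, and the dependency cycle in the clause gadget of Lemma~\ref{Lemma:SCLowerCorrectness} --- which uses precisely the path $\wri(\ell,c)\rightarrow\rd(x,0)$ --- disappears. Ordering the witnesses among themselves in program order cannot repair this either: the missing edges must \emph{end} at reads of the original threads, and no added event can create an incoming edge to such a read, since $\rf$ edges into those reads are already fixed by the history and $\po$ edges into them would come from their own thread.

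The paper's construction differs in a small but essential way: it \emph{removes} the trailing guard read from $T_i(\ell)$ and relocates it into the fresh thread, $\po$-after the witness read, i.e.\ $T_0'(\ell)$ performs $\rd(\ell,c)$ followed by $\rd(x,0)$. The lost program-order edge is replaced by the path $\wri(\ell,c)\xrightarrow{\rf}\rd(\ell,c)\xrightarrow{\po}\rd(x,0)$, and---this is the key payoff---the resulting history $h'_\varphi$ contains no $\po$ pairs in $\WR\times\RD$ or $\WR\times\WR$ at all, while every $\rf$ edge is external, so for \emph{every} total order $\tw$ the graphs $\graph{sc}$, $\graph{tso}$, and $\graph{pso}$ coincide. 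Correctness then reduces to $\SC$-consistency of $h'_\varphi$, which is equivalent to that of $h_\varphi$ by a simple path-replacement argument; no case analysis on the from-read relation or on store orders is needed. Your proposal correctly names the obstacle (neutralizing the relaxed write-read and, for $\PSO$, write-write pairs, the latter indeed vacuous since each core thread has at most one write) but explicitly defers exactly this step, and the witness-read device as you describe it cannot supply the missing edges without also moving the guard reads. That relocation is the missing idea.
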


\bibliographystyle{plain}
\bibliography{content/cite}

\newpage
\appendix
\section{Proofs of Section \ref{Section:Framework}}
\label{Section:ProofsFramework}
\begin{proof}[Proof of Lemma \ref{Lemma:Recursion}]
	Let $V \subseteq \WR$ be non-empty.
	We have to prove two directions.
	To this end, first assume $T[V] = 1$.
	We show that there is an element $v \in V$ such that $\graph{loc}[V \! \setminus \! \setcon{v},v]$ and $\graph{mm}[V \! \setminus \! \setcon{v},v]$ are both acyclic and $T[V \! \setminus \! \setcon{v},v] = 1$.
	
	Since $T[V] = 1$, there is a snapshot ordering $\tw[V] = \torder[V] \cup \rorder[V]$ with a total order $\torder[V]$ on $V$.
	Moreover, the snapshot order satisfies that the graphs
	\begin{align*}
		\graph{loc}(\tw[V]) &= (O, \poloc \cup \rf \cup \tw[V] \cup \cf[V]), \\
		\graph{mm}(\tw[V]) &= (O, \pomm \cup \rfmm \cup \tw[V] \cup \cf[V])
	\end{align*}
	are both acyclic.
	
	We extract the suitable write event.
	Since $\torder[V]$ is total on $V$, there is a unique minimal element $v$ according to the order.
	We set $V' = V \! \setminus \! \setcon{v}$ and show the following three facts:
	
	\begin{enumerate}[(1)]
		\item \label{Proof:F1}$\graph{loc}[V',v]$ is a subgraph of $\graph{loc}(\tw[V])$,
		\item \label{Proof:F2}$\graph{mm}[V',v]$ is a subgraph of $\graph{mm}(\tw[V])$, and
		\item \label{Proof:F3}$T[V'] = 1$.
	\end{enumerate}

	With the these facts at hand, we can conclude that $v$ is the element we were looking. Since $\graph{loc}(\tw[V])$ and $\graph{mm}(\tw[V])$ are acyclic, any subgraph of these are as well.
	
	We begin by proving \textbf{(\ref{Proof:F1})}.
	To this end, we show that each edge of the coherence graph
	\begin{align*}
		\graph{loc}[V',v] = (O, \poloc \cup \rf \cup \rorder[V',v] \cup \cf[V',v])
	\end{align*}
	is already present in $\graph{loc}(\tw[V])$.
	Since $\poloc$ and $\rf$ are already present in $\graph{loc}(\tw[V])$, we need to show that the edges of $\rorder[V',v]$ and $\cf[V',v]$ are also there.
	
	By definition, $\rorder[V',v] = \rorder[V] \cup \Setcon{(v,w)}{w \in V'}$.
	Since $v$ was selected to be the minimal element of $\torder[V]$ on $V$ and $\torder[V]$ is total, we get that each edge $(v,w)$ with $w \in V'$ is also contained in $\torder[V]$.
	Hence, we can deduce the following:
	\begin{align*}
		\rorder[V',v] \subseteq \rorder[V] \cup \torder[V] = \tw[V].
	\end{align*}
	For the edges of $\cf[V',v]$ we then obtain:
	\begin{align*}
		\cf[V',v] = \rf^{-1} \circ \bigcup_{x \in \Var} \rorder[V',v]_x \subseteq \rf^{-1} \circ \bigcup_{x \in \Var} \tw[V]_x = \cf[V],
	\end{align*}
	showing that $\graph{loc}[V',v]$ is a subgraph of $\graph{loc}(\tw[V])$.
	
	The proof of \textbf{(\ref{Proof:F2})} follows from \textbf{(\ref{Proof:F1})}.
	We have to show that each edge of 
	\begin{align*}
		\graph{mm}[V',v] = (O, \pomm \cup \rfmm \cup \rorder[V',v] \cup \cf[V',v])
	\end{align*}
	is contained in $\graph{mm}(\tw[V])$.
	The edges of $\pomm$ and $\rfmm$ are already present in $\graph{mm}(\tw[V])$. 
	Since $\rorder[V',v] \subseteq \tw[V]$ and $\cf[V',v] \subseteq \cf[V]$ hold by \textbf{(\ref{Proof:F1})}, we get that $\graph{mm}[V',v]$ is a proper subgraph of $\graph{mm}(\tw[V])$.
	
	It is left to prove \textbf{(\ref{Proof:F3})}.
	To this end, we construct a snapshot order $\tw[V']$ on $V'$ such that $\graph{loc}(\tw[V'])$ is a subgraph of $\graph{loc}(\tw[V])$ and $\graph{mm}(\tw[V'])$ is a subgraph of $\graph{mm}(\tw[V])$.
	This shows that $T[V'] = 1$ since the two latter graphs are acyclic.
	
	We construct the snapshot order $\tw[V']$ as follows.
	Set $\tw[V'] = \torder[V'] \cup \rorder[V']$, where $\rorder[V'] = \Setcon{(\overline{w},w)}{\overline{w} \in \overline{V'}, w \in V'}$ and $\torder[V'] = \torder[V] \cap (V' \! \times V')$ is the restriction of $\torder[V]$ to the set $V'$.
	Note that $\torder[V']$ is total on $V'$.
	Hence, $\tw[V']$ is a proper snapshot order.
	
	By definition we get that $\torder[V'] \subseteq \torder[V]$.
	Now consider an edge $(\overline{w},w)$ from $\rorder[V']$ with $\overline{w} \in \overline{V'}$ and $w \in V'$.
	There are two cases:
	(1) For $\overline{w} = v$, the edge $(v,w)$ is already contained in $\torder[V]$ since $v$ was chosen to be $\torder[V]$-minimal and $\torder[V]$ is total on $V$.
	(2) For $\overline{w} \neq v$, we get that $\overline{w} \in \overline{V}$.
	Hence, the edge $(\overline{w},w)$ is already contained in $\rorder[V]$.
	Putting the cases together, we obtain the following inclusions:
	\begin{align*}
		\tw[V'] &= \torder[V'] \cup \rorder[V'] \subseteq \torder[V] \cup \rorder[V] = \tw[V], \\
		\cf[V'] &= \rf^{-1} \circ \bigcup_{x \in \Var} \tw[V']_x \subseteq \rf^{-1} \circ \bigcup_{x \in \Var} \tw[V]_x = \cf[V].
	\end{align*}
	
	From these inclusions, we immediately obtain that $\graph{loc}(\tw[V'])$ is a subgraph of $\graph{loc}(\tw[V])$ and that $\graph{mm}(\tw[V'])$ is a subgraph of $\graph{mm}(\tw[V])$.
	
	For the other direction assume the existence a write event $v \in V$ such that the coherence graphs $\graph{loc}[V',v]$ and $\graph{mm}[V',v]$ are acyclic and $T[V'] = 1$.
	Here, $V' = V \! \setminus \! \setcon{v}$.
	In order to show that $T[V] = 1$, we need to construct a snapshot order $\tw[V]$ on $V$ such that the graphs $\graph{loc}(\tw[V])$ and $\graph{mm}(\tw[V])$ are acyclic.
	
	By the assumption $T[V'] = 1$, there is a snapshot order $\tw[V'] = \torder[V'] \cup \rorder[V']$ such that
	\begin{align*}
		\graph{loc}(\tw[V']) &= (O,\poloc \cup \rf \cup \tw[V'] \cup \cf[V']), \\
		\graph{mm}(\tw[V']) &= (O, \poloc \cup \rfmm \cup \tw[V'] \cup \cf[V'])
	\end{align*}
	are both acyclic.
	We extend the order $\torder[V']$ by adding $v$ as new minimal element.
	Define $\torder[V] = \torder[V'] \cup \Setcon{(v,w)}{w \in V'}$.
	Then, $\torder[V]$ is a total order on $V$.
	Thus, $\tw[V] = \torder[V] \cup \rorder[V]$ with relation $\rorder[V] = \Setcon{(\overline{w},w)}{\overline{w} \in \overline{V}, w \in V}$ is a snapshot order on $V$.
	
	We show the acyclicity of $\graph{loc}(\tw[V])$ and $\graph{mm}(\tw[V])$ in two steps.
	First, we define intermediary graphs $\Jgraph{loc}[V,v]$ and $\Jgraph{mm}[V,v]$ and show that $\graph{loc}(\tw[V])$ and $\graph{mm}(\tw[V])$ are subgraphs.
	In the second step we prove that $\Jgraph{loc}[V,v]$ and $\Jgraph{mm}[V,v]$ are acyclic.
	In fact, we show that a cycle in one of the two graphs would induce a cycle in one of the coherence graphs $\graph{loc}[V',v]$ and $\graph{mm}[V',v]$ or in $\graph{loc}(\tw[V'])$ and $\graph{mm}(\tw[V'])$, which are acyclic by assumption.
	The acyclicity of $\graph{loc}(\tw[V])$ and $\graph{mm}(\tw[V])$ follows and we obtain $T[V] = 1$.
	
	We begin with the first step.
	Before we define the intermediary graphs, we need two new relations depending on the fact that $v$ is the new minimal element of $V$.
	Define
	\begin{align*}
		\inc(v) = \Setcon{(\overline{w},v)}{\overline{w} \in \overline{V}} ~\text{and}~
		\cfinc(v) = \rf^{-1} \circ \bigcup_{x \in \Var} \inc(v)_x.
	\end{align*}
	Note that a pair $(r,v)$ is in $\cfinc(v)$ if $r$ is a read event with a write event $\overline{w} \in \overline{V}$ such that $(\overline{w},r) \in \rf$, $(\overline{w},v) \in \inc(v)$, and $\overline{w}$ and $v$ write to the same variable $x$.
	
	The graphs $\Jgraph{loc}[V,v]$ and $\Jgraph{mm}[V,v]$ are now defined as follows:
	\begin{align*}
		\Jgraph{loc}[V,v] &= (O, \poloc \cup \rf \cup \tw[V'] \cup \inc(v) \cup \cf[V'] \cup \cfinc(v)), \\
		\Jgraph{mm}[V,v] &= (O, \pomm \cup \rfmm \cup \tw[V'] \cup \inc(v) \cup \cf[V'] \cup \cfinc(v)).
	\end{align*}
	
	We show that $\graph{loc}(\tw[V])$ is a subgraph of $\Jgraph{loc}[V,v]$.
	First note that the edges of $\poloc$ and $\rf$ are already present in $\Jgraph{loc}[V,v]$.
	It is left to argue that $\tw[V]$ and $\cf[V]$ are included in the edges of $\Jgraph{loc}[V,v]$ as well.
	To this end, consider the following inclusion:
	\begin{align*}
		\torder[V] = \torder[V'] \cup \Setcon{(v,w)}{w \in V'}
		\subseteq \torder[V'] \cup \rorder[V'] = \tw[V'].
	\end{align*}
	The first equality is the definition of $\torder[V]$.
	The inclusion holds since $v \in \overline{V'}$ and thus $\Setcon{(v,w)}{w \in V'} \subseteq \rorder[V']$.
	Relation $\rorder[V]$ is embedded as follows:
	\begin{align*}
		\rorder[V] &= \Setcon{(\overline{w},w)}{\overline{w} \in \overline{V}, w \in V} \\
		&= \Setcon{(\overline{w},w)}{\overline{w} \in \overline{V}, w \in V'} \cup \Setcon{(\overline{w},v)}{\overline{w} \in \overline{V}} \\
		&\subseteq \rorder[V'] \cup \inc(v).
	\end{align*}
	The latter inclusion holds due to the fact that $\overline{V} \subseteq \overline{V'}$.
	Combining the above inclusions then yields $\tw[V] \subseteq \tw[V'] \cup \inc(v)$.
	For the conflict relation, we consequently obtain
	\begin{align*}
		\cf[V] &= \rf^{-1} \circ \bigcup_{x \in \Var} \tw[V]_x \\
		&\subseteq \rf^{-1} \circ \bigcup_{x \in \Var} \left( \tw[V'] \cup \inc(v) \right)_x \\
		&= \rf^{-1} \circ \bigcup_{x \in \Var} \left( \tw[V']_x \cup \inc(v)_x \right) \\
		&=  \left( \rf^{-1} \circ \bigcup_{x \in \Var} \tw[V']_x \right) \cup 
		\left( \rf^{-1} \circ \bigcup_{x \in \Var} \inc(v)_x \right) \\
		&= \cf[V'] \cup \cfinc(v).
	\end{align*}
	
	Hence, all edges of $\graph{loc}(\tw[V])$ are present in $\Jgraph{loc}[V,v]$ which proves the subgraph relation.
	
	The fact that $\graph{mm}(\tw[V])$ is a subgraph of $\Jgraph{mm}[V,v]$ follows easily from the above observations. 
	Since the relations $\pomm$ and $\rfmm$ are already present in $\Jgraph{mm}[V,v]$ and $\tw[V] \subseteq \tw[V'] \cup \inc(v)$ as well as $\cf[V] \subseteq \cf[V'] \cup \cfinc(v)$ hold independently from the considered graphs, we obtain the desired subgraph relation.
	
	In the second step, we show the acyclicity of $\Jgraph{loc}[V,v]$ and $\Jgraph{mm}[V,v]$.
	We focus on $\Jgraph{loc}[V,v]$ since the proof for $\Jgraph{mm}[V,v]$ is similar.
	Assume there is a cycle $C$ in $\Jgraph{loc}[V,v]$.
	If $C$ does neither contain an edge from $\inc(v)$ nor from $\cfinc(v)$, the cycle has only edges over $\poloc \cup \rf \cup \tw[V'] \cup \cf[V']$.
	Hence, $C$ is a cycle in $\graph{loc}(\tw[V'])$ which is a contradiction since the graph is acyclic.
	Therefore, $C$ goes through at least one edge from $\inc(v)$ or $\cfinc(v)$.
	In both cases, this means that $C$ passes through the write event $v$.
	We may think of $C$ as a cycle that starts and ends in $v$:
	$C$ is of the form
	\begin{align*}
		C = e_0 . e_1 \dots e_\ell
	\end{align*}
	with $e_i$ edges and $e_0 = (v,w_1)$, $e_\ell = (w_\ell,v)$ for events $w_1,w_\ell \in O$.
	Moreover, we assume that $C$ is short.
	The write event $v$ is only visited once.
	Otherwise, we would get a shorter cycle.
	
	Out of $C$, we show how to construct a cycle $\hat{C}$ in $\graph{loc}[V',v]$ which contradicts the assumption that the coherence graphs are acyclic.
	To this end, we induct over the edges of $C$ and construct $\hat{C}$ while keeping the invariant that all edges of $\hat{C}$ are in $\graph{loc}[V',v]$.
	
	Initially, $\hat{C}$ does not have any edges.
	The induction step is as follows.
	Assume we have already constructed a part of $\hat{C}$ while iterating to the $i$-th edge $e$ of $C$.
	We get the following case distinction, based upon the type of $e$:
	
	\begin{itemize}
		\item If $e$ is an edge in $\poloc \cup \rf$.
		Then $e$ is also present in $\graph{loc}[V',v]$ and we can add it to $\hat{C}$ by setting: $\hat{C} = \hat{C} . e$.
		
		\item If $e$ is an edge in $\tw[V']$ we get two subcases:
		(1) If $e$ is in $\torder[V']$.
		Then, $e = (w,w')$, where $w,w' \in V'$ are write events.
		There is an edge $(v,w')  \in \rorder[V',v]$ by definition.
		We delete the content of $\hat{C}$ and start a new cycle with this edge:
		$\hat{C} = (v,w')$.
		It lies in $\graph{loc}[V',v]$.
		
		(2) If $e$ is an edge in $\rorder[V']$.
		Then $e = (\overline{w},w)$ where $\overline{w} \in \overline{V'}$ and $w \in V'$.
		The edge then also lies in $\rorder[V',v]$ and thus in $\graph{loc}[V',v]$.
		We add it to $\hat{C}$: $\hat{C} = \hat{C}.e$.
		
		\item If $e$ is an edge in $\inc(v)$.
		In this case, $e$ is of the form $(\overline{w},v)$ with $\overline{w} \in \overline{V}$.
		Thus, $e$ lies in $\rorder[V',v]$ and therefore in $\graph{loc}[V',v]$.
		We add the edge to $\hat{C}$ by $\hat{C} = \hat{C}.e$.
		
		\item If $e$ is an edge in $\cf[V']$.
		Then $e = (r,w)$, where $r$ is a read event and $w \in V'$ is a write event.
		There is an edge $(v,w) \in \rorder[V',v]$ and thus in $\graph{loc}[V',v]$.
		We delete $\hat{C}$ and start a new cycle via $\hat{C} = (v,w)$.
		
		\item If $e$ is an edge in $\cfinc(v)$.
		Then $e$ lies in $\cf[V',v]$ since $\inc(v) \subseteq \rorder[V',v]$ and
		\begin{align*}
			\cfinc(v) = \rf^{-1} \circ \bigcup_{x \in \Var} \inc(v)_x \subseteq \rf^{-1} \circ \bigcup_{x \in \Var} \rorder[V',v]_x = \cf[V',v].
		\end{align*}
		Hence, $e$ can be added by $\hat{C} = \hat{C}.e$.
	\end{itemize}

	In the construction, the first edge of $\hat{C}$ always leaves the write event $v$, it is of the form $(v,w)$ for some event $w \in O$.
	Moreover, the edges in $\inc(v)$ and $\cfinc(v)$ are the only edges in $\Jgraph{loc}[V,v]$ that are incoming for $v$.
	Such an edge is always the last edge of $C$ and does never get deleted during the construction of $\hat{C}$.
	Note that we assumed the existence of such an edge.
	Hence, by construction $\hat{C}$ is a non-empty cycle in $\graph{loc}[V',v]$ that starts and ends in $v$.
	This contradicts the acyclicity of the coherence graph.
	Altogether, $\Jgraph{loc}[V,v]$ is acyclic.
	\qedhere
\end{proof}
\begin{proof}[Proof of Lemma \ref{Lemma:CoherenceAcyclicity}]
	We provide a proof for $\graph{loc}[V,v]$ since the statement for $\graph{mm}[V,v]$ is shown similarly.
	First, we focus on the construction of the graph.
	Recall that
	\begin{align*}
		\graph{loc}[V,v] = (O, \poloc \cup \rf \cup \rorder[V,v] \cup \cf[V,v]).
	\end{align*}
	Constructing the vertices can clearly be done in time $\bigO(n)$, as $n = \abs{O}$.
	The edges of $\poloc \cup \rf$ are part of the input.
	Hence, we can iterate over these edges and add them to the graph.
	Since $\poloc \cup \rf$ is a relation in $O \times O$, this takes time at most $\bigO(n^2)$.
	Next, we construct the edges of the relation
	\begin{align*}
		\rorder[V,v] = \Setcon{(\overline{w},w)}{\overline{w} \in \overline{V \cup \setcon{v}}, w \in V \cup \setcon{v}} \cup \Setcon{(v,w)}{w \in V}.
	\end{align*}
	The latter part is simple to construct:
	we add an edge $(v,w)$ for each $w \in V$.
	These are at most $\bigO(k)$ and takes the same amount of time.
	For constructing the former relation, we iterate over $\overline{w} \in \overline{V \cup \setcon{v}}$ and $w \in V \cup \setcon{v}$ and add the edge $(\overline{w},w)$.
	This takes time at most $\bigO(k^2) = \bigO(k \cdot n)$ time.
	Hence, the relation $\rorder[V,v]$ contains at most $\bigO(k \cdot n)$ many edges and can be constructed in time $\bigO(k \cdot n)$.
	
	It is left to construct the conflict relation $\cf[V,v] = \rf^{-1} \circ \bigcup_{x \in \Var} \rorder[V,v]_x$.
	To this end, we first construct the relation $\rf^{-1}$ by turning around the edges stored in $\rf$.
	Note that $\rf$ consists of at most $\bigO(n)$ many of these since it contains exactly one edge for each read event.
	Hence, $\rf^{-1}$ can be constructed in time $\bigO(n)$.
	The relations $\rorder[V,v]_x$ can be constructed from $\rorder[V,v]$. 
	We iterate over the edges in $\rorder[V,v]$ and put an edge $(w,w')$ to the corresponding projection $\rorder[V,v]_x$ if both write events $w,w'$ write to variable $x$.
	This takes time at most $\bigO(k \cdot n)$.
	The composition $\cf[V,v] = \rf^{-1} \circ \bigcup_{x \in \Var} \rorder[V,v]_x$ is then obtained as follows.
	We iterate over all edges $(r,w)$ in $\rf^{-1}$ and $(w',\hat{w})$ in one of the $\rorder[V,v]_x$ and add $(r,\hat{w})$ to $\cf[V,v]$ if $w = w'$.
	Since $\rf^{-1}$ contains at most $\bigO(n)$ many edges and the union of the $\rorder[V,v]_x$ contains at most $\bigO(k \cdot n)$ many edges, constructing $\cf[V,v]$ takes time $\bigO(k \cdot n^2)$.
	
	Hence, the graph $\graph{loc}[V,v]$ can be constructed in time $\bigO(k \cdot n^2)$.
	It is left to show that cycles in $\graph{loc}[V,v]$ can be detected within the same amount of time.
	To this end, we apply Kahn's algorithm \cite{Kahn1962}.
	It finds a topological sorting for a given graph.
	Such a sorting only exists if the graph is acyclic.
	If this is not the case, the algorithm outputs an error.
	Kahn's algorithm runs in time linear in the vertices and edges.
	In our setting, it needs at most $\bigO(n^2)$ time since we can have at most $n^2$ many edges.
	This is below the bound of $\bigO(k \cdot n^2)$ and therefore finishes the proof of the lemma.
	\qedhere
\end{proof}
\section{Proofs of Section \ref{Section:Instantiations}}
\label{Section:ProofsInstantiations}
\begin{proof}[Proof of Lemma \ref{Lemma:GlocLinearization}]
	First, we consider the structure of the acyclic graph $\graph{loc}^\ww$.
	In fact, $\graph{loc}^\ww$ decomposes into a disjoint union of its projections to the variables.
	We show that
	\begin{align*}
		\graph{loc}^\ww = \bigcup_{x \in \Var} \graph{loc}^\ww(x),
	\end{align*}
	where $\graph{loc}^\ww(x) = (O(x), \poloc_x \cup \rf_x \cup \ww_x \cup \fr_x)$ is the projection to variable $x$ and the union is taken over vertices and edges.
	
	It is clear that each projection $\graph{loc}^\ww(x)$ is contained in $\graph{loc}^\ww$ as $O(x) \subseteq O$ and each projected relation is a subset of the original relation.
	For the other inclusion, first note that the set of vertices $O$ is contained in the union since we can write $O = \bigcup_{x \in \Var} O(x)$.
	Phrased differently, each event in $O$ refers to exactly one location.
	It is left to show that all edges of $\graph{loc}^\ww$ are contained in the union.
	By definition, each of the relations $\poloc$, $\rf$, $\ww$, and $\fr$ only relates events to the same location.
	Hence, an edge $(w,w')$ from one of the relations is an edge among events on a variable $x$ and therefore contained in the graph $\graph{loc}^\ww(x)$.
	The union is disjoint since there is no edge in $\graph{loc}^\ww$ that involves events on different variables.
	
	The order $\tw$ contains the store order $\ww$ and for each $x \in \Var$, order $\ww_x$ is total on $\WR(x)$.
	This implies, that $\tw_x = \ww_x$. 
	Hence, $\tw$ differs from $\ww$ by additional edges among write events on different variables.
	Formally, we can write $\tw$ as a disjoint union:
	\begin{align*}
		\tw = \ww \cup \ext,
	\end{align*}
	where $\ext = \Setcon{(w,w') \in \tw}{\var(w) \neq \var(w')}$.
	This means that the graph $\graph{loc}^\tw$ of interest has a structure similar to $\graph{loc}^\ww$ but with edges connecting the projections:
	\begin{align*}
		\graph{loc}^\tw = \ext \cup \bigcup_{x \in \Var} \graph{loc}^\ww(x).
	\end{align*}
	In the union, we interpret the relation $\ext$ as a graph with vertices $O$ and edges $\ext$.
	
	Now assume that there is a cycle $C$ in the graph $\graph{loc}^\tw$.
	Then, $C$ takes the following form:
	\begin{align*}
		C = o_1 \xrightarrow{\pi_1} o'_1 
		\xrightarrow{\ext} o_2
		\xrightarrow{\pi_2} o'_2
		\xrightarrow{\ext} \dots
		\xrightarrow{\ext} o_\ell
		\xrightarrow{\pi_\ell} o'_\ell,
	\end{align*}
	where (1) $o_i,o'_i$ are write events in a graph $\graph{loc}^\ww(x_i)$ for a variable $x_i$, (2) $o'_\ell = o_1$, (3) each $\pi_i$ is a path within $\graph{loc}^\ww(x_i)$, and (4) each edge $o'_i \xrightarrow{\ext} o_{i+1}$ is an edge in $\ext$.
	
	Since $o_i$ and $o'_i$ are write events on the same variable $x_i$ and $\ww_{x_i}$ is a total order on these events, there is a relation between the two writes: either $(o_i,o'_i) \in \ww_{x_i}$ or $(o'_i,o_i) \in \ww_{x_i}$.
	In the latter case, we would immediately get a cycle $o'_i \xrightarrow{\ww_{x_i}} o_i \xrightarrow{\pi_i} o'_i$ in the graph $\graph{loc}^\ww(x_i)$.
	But as a subgraph of $\graph{loc}^\ww$, the graph is acyclic and the cycle cannot appear.
	Hence, we get that $(o_i,o'_i) \in \ww_{x_i}$ for each $i$.
	Since $\ww_{x_i}$ is contained in $\tw$, we have an edge $(o_i,o'_i) \in \tw$ for each $i$.
	Hence, we can shorten the cycle $C$ to a cycle $C^\tw$ of the form:
	\begin{align*}
		C^\tw = o_1 \xrightarrow{\tw} o'_1 \xrightarrow{\tw} o_2 \xrightarrow{\tw} o'_2 \xrightarrow{\tw} \dots \xrightarrow{\tw} o_\ell \xrightarrow{\tw} o'_\ell.
	\end{align*}
	Note that we used the fact $\ext \subseteq \tw$.
	The cycle $C^\tw$ contradicts the fact that $\tw$ is a strict total order of $\WR$.
	Hence, cycle $C$ cannot exist and $\graph{loc}^\tw$ is acyclic.
\end{proof}
\section{Proofs of Section \ref{Section:LowerBounds}}
\label{Section:ProofsLowerBound}
\begin{proof}[Proof of Lemma \ref{Lemma:SCLowerCorrectness}]
	Let $\varphi$ be satisfiable.
	We show that $h_\varphi$ is $\SC$-consistent.
	Since the formula is satisfiable, there is an evaluation function $v : X \rightarrow \setcon{0,1}$ that evaluates $\varphi$ to $1$.
	In order to prove that $h_\varphi$ is $\SC$-consistent, we need to construct a total order $\tw$ on the write events of $h_\varphi$ such that $\graph{sc}$ is acyclic.
	Note that the acyclicity of $\graph{loc}$ is implied.
	In fact, we construct a topological sorting of the vertices of $\graph{sc}$ which implies acyclicity.
	
    For the construction of $\tw$, we first extract an ordering on the literals of each clause.
    For any clause $C_i = \ell^i_1 \vee \ell^i_2 \vee \ell^i_3$, let $L(C_i,1) = \Setcon{\ell^i_j \in C_i}{v(\ell^i_j) = 1}$ be the set of literals in $C_i$ that evaluate to $1$. 
    Since $v$ is satisfying, we get that for each $i \in \setcon{1,\dots,m}$, the set of literals $L(C_i,1)$ is non-empty.
    Similarly, we define $L(C_i,0) = \Setcon{\ell^i_j \in C_i}{v(\ell^i_j) = 0}$.
    
	We begin by constructing a partial order among the literals of the clauses.
    To this end, let $\Nxt(j) = (j \mod 3) + 1$ for $j = 1,2,3$. 
    We first let all literals of a clause that evaluate to $0$ be smaller than the literals evaluating to $1$.
    Formally, we set $L(C_i,0) < L(C_i,1)$ for each $i \in \setcon{1, \dots, m}$. 
    If we find that $\abs{L(C_i,0)} > 1$, there is are two literals evaluating to $0$ (note that it cannot be three).
    In this case, let $L(C_i,0) = \setcon{\ell^i_j, \ell^i_{j'}}$ where $j' = \Nxt(j)$.
    Note that the literals in $L(C_i,0)$ always have this form.
    We set $\ell^i_j < \ell^i_{j'}$.
    The reason why we construct the order like this is that in a topological sorting (linear order) of the events of $h_\varphi$, the corresponding read events of the literals will respect this order.
    
	\subparagraph*{The total order.} 
	We construct $\tw$.
	To this end, we consider the following sets of threads:
	\begin{align*}
		\First(X) &= \Setcon{T_0(x)}{v(x) = 1} \cup \Setcon{T_1(x)}{v(x) = 0}, \\
		\Second(X) &= \Setcon{T_0(x)}{v(x) = 0} \cup \Setcon{T_1(x)}{v(x) = 1}.
	\end{align*}
	The set $\First(X)$ contains those threads that will write the complement evaluation of $v$ into the variables.
	These threads have to run first in an interleaving/topological.
	The variables then get overwritten by the threads of $\Second(X)$.
	These write the correct evaluation $v$ to the variables.
	We need further notation to construct $\tw$.
	Let $\ell$ be a literal over $x \in X$.
	We set
	\begin{align*}
		T_{neg}(\ell) = \left\lbrace
		\begin{aligned}
			T_0(\ell),& ~\text{if}~ \ell = x, \\
			T_1(\ell),& ~\text{otherwise}.
		\end{aligned}
		\right.
	\end{align*}
	This is the thread that stores $0$ in $\ell$. 
	Similarly, we may define a notation for the thread of $h_\varphi$ that stores $1$ in $\ell$.
	It is given by:
	\begin{align*}
		T_{pos}(\ell) = \left\lbrace
		\begin{aligned}
			T_0(\ell),& ~\text{if}~ \ell = \neg x, \\
			T_1(\ell),& ~\text{otherwise}.
		\end{aligned}
		\right.
	\end{align*}
	We go on with the definition of further sets.
	Let $C_i$ be a clause.
	$\First(C_i)$ is the set of threads that write $0$ to the literals that are evaluated to $1$ under $v$.
	It also contains the threads writing $1$ to literals that are evaluated to $0$:
	\begin{align*}
		\First(C_i) = \Setcon{T_{neg}(\ell)}{\ell \in C_i, v(\ell) = 1} \cup \Setcon{T_{pos}(\ell)}{\ell \in C_i, v(\ell) = 0}.
	\end{align*}
	The idea is similar as above.
	In an interleaving of all events, the threads of $\First(C_i)$ run first.
	The variables then get overwritten by the threads of $\Second(C_i)$.
	These forward the evaluation $v$ of the variables to the literals:
	\begin{align*}
		\Second(C_i) = \Setcon{T_{pos}(\ell)}{\ell \in C_i, v(\ell) = 1} \cup \Setcon{T_{neg}(\ell)}{\ell \in C_i, v(\ell) = 0}.
	\end{align*}
	
	The total order $\tw$ consists of several parts obtained from ordering the above sets.
	Let $\Lin_{\WR}(\First(X))$ be some total order on the write events of the threads occurring in $\First(X)$, based on an assumed order on the variables.
	Similarly, let $\Lin_{\WR}(\Second(X))$ be a total order on the write events of the threads in $\Second(X)$.
	Also the second order respects the assumed order on the variables.
	Further, let $\Lin_{\WR}(\First(C_i))$ be a total order on the write events of $\First(C_i)$ that respects the above order on literals (where we see literals as variables of $h_\varphi$).
	Similarly, let $\Lin_{\WR}(\Second(C_i))$ be a total order on write events from the threads of $\Second(C_i)$.
	
	To finally define $\tw$, we use a suitable \emph{append operator}.
	Let $t$ and $r$ be two total orders.
	Then $t.r$ is the total order obtained from ordering the elements of $t$ to be smaller than the elements of $r$ while preserving the orders $t$ and $r$, meaning $t, r \subseteq t.r$.
	The total order $\tw$ on all write events of $h_\varphi$ is then given by combining the total orders defined above as follows.
	Define $\tw = \tw_{first} . \tw_{sec}$, where 
	\begin{align*}
		\tw_{first} &= \Lin_{\WR}(\First(X)) . \Lin_{\WR}(\First(C_1)) \dots \Lin_{\WR}(\First(C_m)), \\
		\tw_{sec} &= \Lin_{\WR}(\Second(X)) . \Lin_{\WR}(\Second(C_1)) \dots \Lin_{\WR}(\Second(C_m)).
	\end{align*}
	
	\subparagraph*{Interleaving the events.}
	We construct an interleaving of all events following the total order $\tw$.
	First, we store the complement evaluation $\bar{v}$ of $v$.
	This is achieved by scheduling $\First(X)$ in the beginning.
	We run these threads in an assumed order on the variables.
	Then, we forward the evaluation $\bar{v}$ to the literals.
	Since the literal threads are guarded by read events, there is a read dependency, meaning that the threads in $\First(X)$ provide the read values for the threads in $\First(C_i)$ for each $i$, see Figure \ref{Figure:LowerBoundSC}.
	For providing the values, we run $\First(X)$ first, followed by $\First(C_i)$ for each $i$.
	Similarly, running $\Second(X)$ will store the evaluation $v$ in the variables and the threads in $\Second(C_i)$ will push it to the literals.
	
	The total order $\tw$ provides the write events in such a way that the read events of the clause threads $T^1(C)$, $T^2(C)$, and $T^3(C)$ can be scheduled properly.
  	Let $C = \ell_1 \vee \ell_2 \vee \ell_3$ be a clause.
  	We distinguish the following cases:
  	
  	(1) If $C$ is satisfied by all three literals, we have $v(\ell_i) = 1$ for $i = 1,2,3$.
  	Then, under $\bar{v}$, we have $\bar{v}(\ell_i) = 0$ for $i = 1,2,3$.
  	Since the write events in $\First(C)$ evaluate the literals under $\bar{v}$, we can schedule the read events $\rd(\ell_1,0)$, $\rd(\ell_2,0)$, and $\rd(\ell_3,0)$ since the values are provided.
  	Technically, we schedule these events after $\tw_{first}$ and before $\tw_{sec}$.
  	The remaining reads $\rd(\ell_1,1)$, $\rd(\ell_2,1)$, and $\rd(\ell_3,1)$ can then be scheduled after $\tw_{sec}$.
  	
  	(2) $C$ is satisfied by two literals.
  	Without loss of generality, we assume $v(\ell_1) = 1$, $v(\ell_2) = 1$, and $v(\ell_3) = 0$.
  	Then $\bar{v}(\ell_1) = 0$, $\bar{v}(\ell_2) = 0$ and $\bar{v}(\ell_3) = 1$.
  	To schedule all the reads of the clause threads properly, we do the following:
  	After $\tw_{first}$, we schedule the reads in the following order $\rd(\ell_1,0) . \rd(\ell_2,0) . \rd(\ell_3,1)$.
  	Note that this conforms to program order and that all the values are provided by the reads in $\tw_{first}$.
  	After $\tw_{sec}$, where the literals are evaluated according to $v$, we can then schedule $\rd(\ell_3,0) . \rd(\ell_1,1) . \rd(\ell_2,1)$.
   
   (3) $C$ is satisfied by one literal.
   Let us assume $v(\ell_1) = 1$, $v(\ell_2) = 0$, and $v(\ell_3) = 0$.
   Then, we get $\bar{v}(\ell_1) = 0$, $\bar{v}(\ell_2) = 1$, and $\bar{v}(\ell_3) = 1$.
   In this case, we schedule the reads $\rd(\ell_1,0) . \rd(\ell_2,1)$ immediately after $\tw_{first}$.
   We cannot schedule $\rd(\ell_3,1)$ since it is blocked by the read $\rd(\ell_2,0)$ the value of which we have not provided yet.
   For providing it, consider $T_{neg}(\ell_2)$.
   The thread occurs in $\Second(C)$ and is scheduled within $\tw_{sec}$.
   Immediately after it performed its write $\wri(\ell_2,0)$, we schedule the read $\rd(\ell_2,0)$ which was blocking.
   Since we did not change the content of variable $\ell_3$ yet, we can schedule $\rd(\ell_3,1)$.
   Note that this fact relies on the order among literals defined above.
   We know that in $\Lin_{\WR}(\Second(C))$, the thread $T_{neg}(\ell_2)$ precedes $T_{neg}(\ell_3)$.
   After the described schedule, $T^2(C)$ and $T^3(C)$ are completely executed.
   After $\tw_{sec}$, the $\ell_i$ store the evaluation under $v$.
   We can then schedule the remaining reads $\rd(\ell_3,0) . \rd(\ell_1,1)$.
   
   By constructing a schedule following these rules for each clause, we obtain an interleaving/total order on all events.
   This implies that $\graph{sc}$ is acyclic.
	
	For the other direction of the proof, assume that $h_\varphi$ is $\SC$-consistent. 
	We show that $\varphi$ is satisfiable.
	By definition, we obtain a total order $\tw$ on the write events of $h_\varphi$ such that $\graph{sc}$ is acyclic.
	We construct the evaluation $v : X \rightarrow \setcon{0,1}$ along the total order:
	\begin{align*}
		v(x) = 1 ~\text{if and only if}~ \wri(x,0) \xrightarrow{\tw} \wri(x,1).
	\end{align*}
	Hence, variable $x$ admits the value that is written latest in $\tw$.
	We show that the evaluation can be consistently extended to the literals.
	For each literal $\ell$ we have:
	\begin{align*}
		v(\ell) = 1 ~\text{if and only if}~ \wri(\ell,0) \xrightarrow{\tw} \wri(\ell,1).
	\end{align*}
	To prove this, let $\ell$ be a literal evaluating to $1$ under $v$.
	Towards a contradiction, suppose that $\wri(\ell,1) \xrightarrow{\tw} \wri(\ell,0)$.
	Without loss of generality, we assume that $\ell = x$.
	The prove for $\ell = \neg x$ is similar.
	Since $v(x) = 1$ as well, we get $\wri(x,0) \xrightarrow{\tw} \wri(x,1)$.
	This will yield the following cycle in $\graph{sc}$:
	\begin{align*}
		\wri(\ell,1) \xrightarrow{\tw} \wri(\ell,0) \xrightarrow{\po} \rd(x,0) \xrightarrow{\cf} \wri(x,1) \xrightarrow{\rf} \rd(x,1) \xrightarrow{\po} \wri(\ell,1).
	\end{align*}
	Hence, we get that $\wri(\ell,0) \xrightarrow{\tw} \wri(\ell,1)$.
	If $\ell$ is a literal evaluating to $0$ under $v$, we can obtain a similar proof.
	This shows the above equivalence.
	
	Now we show that for each clause, there is at least one literal that evaluates to $1$ under $v$.
	Assume the contrary, then there is a clause $C = \ell_1 \vee \ell_2 \vee \ell_3$ such that $v(\ell_i) = 0$ for $i = 1,2,3$.
	By the equivalence above, we obtain $\wri(\ell_i,1) \xrightarrow{\tw} \wri(\ell_i,0)$ for each $i = 1,2,3$.
	From this, we can obtain the following cycle in $\graph{sc}$:
	\begin{align*}
	\rd(\ell_1,0) \xrightarrow{\po} \rd(\ell_2,1)
	\xrightarrow{\cf} \wri(\ell_2,0)
	\xrightarrow{\rf} \rd(\ell_2,0) \\
	\xrightarrow{\po} \rd(\ell_3,1)
	\xrightarrow{\cf} \wri(\ell_3,0)
	\xrightarrow{\rf} \rd(\ell_3,0) \\
	\xrightarrow{\po} \rd(\ell_1,1)
	\xrightarrow{\cf} \wri(\ell_1,0)
	\xrightarrow{\rf} \rd(\ell_1,0).
	\end{align*}
	Hence, the clauses are satisfied under $v$.
	This completes the proof.
	\qedhere
\end{proof}
\begin{figure}[h]
	\begin{center}
	\begin{tikzpicture}
		\node  (Tell0) {$T_0(\ell):$};
		\node [below = -0.1cm of Tell0] (Tell0rdx0a) {$\rd(x,0)$};
		\node [below = -0.1cm of Tell0rdx0a] (Tell0wrell) {$\wri(\ell,c)$};
		
		\node [right = 0.35cm of Tell0] (Tell1) {$T_1(\ell):$};
		\node [below = -0.1cm of Tell1] (Tell1rdx1a) {$\rd(x,1)$};
		\node [below = -0.1cm of Tell1rdx1a] (Tell1wrell) {$\wri(\ell,d)$};
		
		\node [right = 0.35cm of Tell1](Tx0) {$T_0'(\ell):$};
		\node [below = -0.1cm of Tx0] (wrx0) {$\rd(\ell,c)$};
		\node [below = -0.1cm of wrx0] (wrx01) {$\rd(x,0)$};
		
		\node [right = 0.35cm of Tx0] (Tx1) {$T_1'(\ell):$};
		\node [below = -0.1cm of Tx1] (wrx1) {$\rd(\ell,d)$};
		\node [below = -0.1cm of wrx1] (wrx11) {$\rd(x,1)$};
	\end{tikzpicture}
\end{center}
	\captionof{figure}{Parts of the history $h'_\varphi$ for  a literal $\ell \in L$.
		Values of $c$ and $d$ depend on $\ell$.
		If $\ell = x$, then $c = 0, d = 1$. Otherwise, $c = 1, d = 0$.}
	\label{Figure:LowerBoundPSOTSO}
\end{figure}
\begin{proof}[Proof of Lemma \ref{Thorem:TSOPSOLower}]
	Recall that for the memory models $\TSO$ and $\PSO$, the  preserved program orders are $\potso = \po \! \setminus \WR \times \! \RD$ and $\popso = \po \! \setminus \! ( \WR \! \times \! \RD \cup \WR \! \times \! \WR )$, respectively.
	While there are no $\po$-relations of the form $ \WR \! \times \! \WR$ in our lower bound construction $h_\varphi$ for the case of $\SC$, it has $\po$-relations of the form $ \WR \! \times \! \RD$.
	Recall that the threads $T_0(\ell)$ and $T_1(\ell)$ were guarded by reads. 
	The program order edges connecting the write event in the threads with the latter read events will vanish under $\TSO$ and $\PSO$.
	Hence, for a total order $\tw$ on the write events of $h_\varphi$, a cycle in $\graph{sc}$ does not necessarily imply a cycle on $\graph{tso}$ or $\graph{pso}$.
	
	We overcome this issue by replacing the latter guard in $T_i(\ell)$ by a separate thread $T_i'(\ell)$.
	The thread reads the value written to $\ell$ in $T_i'(\ell)$, followed by the guarding read.
	The construction is shown in Figure \ref{Figure:LowerBoundPSOTSO}.
	We denote the obtained history by $h'_\varphi$.
	The advantage of $h'_\varphi$ is, that for any total order $\tw$, the graphs $\graph{sc}$, $\graph{tso}$, and $\graph{pso}$ are all equal.
	This is due to the construction.
	There are no $\po$-relations of the form $\WR \! \times \! \RD $ or $\WR \! \times \! \WR$ which could be relaxed by $\TSO$ or $\PSO$.
	Intuitively, we enforce sequential behavior with the new threads.
	
	It is left to prove that $\varphi$ is satisfiable if and only if $h'_\varphi$ is $\SC$-consistent (and thus $\TSO / \PSO$-consistent).
	To this end, note that $h_\varphi$ is $\SC$-consistent if and only if $h'_\varphi$ is.
	Indeed, the program order edge $\wri(\ell,c) \xrightarrow{\po} \rd(x,0)$ in $h_\varphi$ is replaced by $\wri(\ell,c) \xrightarrow{\rf} \rd(\ell,c) \xrightarrow{\po} \rd(x,0)$ in $h'_\varphi$.
	A similar replacement is done for $\wri(\ell,d) \xrightarrow{\po} \rd(x,1)$.
	Hence, there is a path $\wri(\ell,c) \rightarrow^* \rd(x,0)$ in $h_\varphi$ if and only if there is a path $\wri(\ell,c) \rightarrow^* \rd(x,0)$ in $h'_\varphi$.
	Due to this, acyclicity (and hence $\SC$-consistency) is preserved across the histories.
	\qedhere
\end{proof}

\end{document}